\titleformat*{\section}{\Large\bfseries}
\titleformat*{\subsection}{\large\bfseries}
\journal{Economic Theory}
\newcolumntype{?}[1]{!{\vrule width #1}}
\newcommand{\vabs}{\widehat{v}}
\newcommand{\vrel}{v}
\newcommand{\wabs}{\widehat{W}}
\newcommand{\wrel}{W}
\newcommand{\wmabs}[1]{absolute-$#1$-maximizer}
\newcommand{\wmrel}[1]{relative-$#1$-maximizer}
\newcommand{\en}{}
\definecolor{myBlue}{RGB}{120,200,210}
\definecolor{myGreen}{RGB}{0,180,0}
\definecolor{myRed}{RGB}{220,0,0}
\definecolor{myOrchid}{RGB}{190,125,219}
\theoremstyle{plain}
\newtheorem{theorem}{Theorem}[section]
\newtheorem*{theorem*}{Theorem}
\newtheorem{lemma}[theorem]{Lemma}
\newtheorem*{lemma*}{Lemma}
\newtheorem{corollary}[theorem]{Corollary}
\theoremstyle{definition}
\newtheorem{example}[theorem]{Example}
\newtheorem{definition}[theorem]{Definition}
\newtheorem{remark}[theorem]{Remark}
\newtheorem*{remark*}{Remark}
\newtheorem*{assumption*}{Assumption}
\def\c#1{\multicolumn{1}{|c|}{#1}}  
\begin{document}

\begin{frontmatter}
\title{Monotonicity and Competitive Equilibrium in Cake-cutting}

\author[biu]{Erel Segal-Halevi}

\author[gt,cor]{Bal\'azs Sziklai}

\address[biu]{Bar-Ilan University, Ramat-Gan 5290002, Israel. erelsgl@gmail.com}

\address[gt]{Game Theory Research Group, Centre for Economic and Regional Studies, Hungarian Academy of Sciences, Email: sziklai.balazs@krtk.mta.hu}

\address[cor]{Corvinus University of Budapest, Department of Operations Research and Actuarial Sciences.}

\begin{abstract}
We study the monotonicity properties of solutions in the classic problem of fair cake-cutting --- dividing a heterogeneous resource among agents with different preferences. Resource- and population-monotonicity relate to scenarios where the cake, or the number of participants who divide the cake, changes. It is required that the utility of all participants change in the same direction: either all of them are better-off (if there is more to share or fewer to share among) or all are worse-off (if there is less to share or more to share among).

We formally introduce these concepts to the cake-cutting problem and examine whether they are satisfied by various common division rules.
We prove that the Nash-optimal rule, which maximizes the product of utilities, is
resource-monotonic and population-monotonic, in addition to being Pareto-optimal, envy-free and satisfying a strong competitive-equilibrium condition.
Moreover, we prove that it is the only rule among a natural family of welfare-maximizing rules that is both proportional and resource-monotonic.
\end{abstract}

\begin{keyword}
game theory\sep cake-cutting\sep resource-monotonicity\sep population-monotonicity\sep additive utilities
\end{keyword}
\end{frontmatter}

\section{Introduction}
The interest in monotonicity axioms was motivated by paradoxes such as the throw-away paradox \citep{Aumann1974Note}: in some cases an agent can improve his final utility by discarding some goods from the initial endowment.
Such apparent paradoxes occur in real life too. Farmers may want to burn crops in order to increase their market-price.
In traffic networks, it may be possible to decrease the time spent on traffic-jams by removing a bridge (this is the so-called Breass Paradox, \citet{Braess1968}).
In all these cases, some agents are worse-off when the social endowment is larger. Such agents have an incentive to destroy some of the endowment, or at least prevent growth, in order to improve their well-being.

These issues motivated the axiom of  \emph{resource-monotonicity (RM)}.%
\footnote{
Resource-monotonicity was introduced by \citet{Moulin1988}.
It is also common in cooperative game theory,
where it is called \emph{aggregate monotonicity} \citep{Peleg2007}.
}
It requires that when new resources are added, and the same division rule is used consistently, the utility of all agents should weakly increase.
A related axiom is \emph{population-monotonicity (PM)}. It is concerned with changes in the number of participants.
It requires that 
when a new agent joins the process, all existing participants have to make sacrifices in order to support the new agent, thus their utility weakly decreases. Both properties have inverse versions ---
when the cake becomes smaller every utility should weakly decrease (RM), when someone leaves the division process every utility should weakly increase (PM).

The present paper studies these two monotonicity requirements in the framework of the classic \emph{fair cake-cutting} problem \citep{Steinhaus1948}, where a single heterogeneous resource --- such as land or time --- has to be divided fairly. Fair cake-cutting protocols can be applied in inheritance cases and divorce settlements. They can be also used to divide broadcast time of advertisements and priority access time for customers of an Internet service provider \citep{Caragiannis2011}.

The notion of fairness in cake-cutting is commonly restricted to two properties: \emph{proportionality} means that each of the $n$ agents should receive a value of at least $1/n$ of the total cake value; \emph{envy-freeness} means that each agent weakly prefers his share over the share of any other agent.
Monotonicity axioms have not been adapted so far for the cake-cutting literature. Indeed, they are violated by all classic fair-cake-cutting procedures that we checked. For example, the classic cut-and-choose protocol is proportional but not resource-monotonic\footnote{In an accompanying technical report \citep{ourTechReport} we provide similar examples showing that other classic cake-cutting procedures, like Banach-Knaster, Dubins-Spanier and many others, violate both RM and PM.} (Section \ref{sec:examples}).
It is easy to find monotonic rules that are not proportional (e.g.\ the rule that gives the entire cake to a pre-specified agent). Our goal in this paper is to find division rules that satisfy all fairness axioms simultaneously.

Initially (\textbf{Section \ref{sec:maximizers}})
we focus on two natural families of welfare-maximizing rules --- rules that maximize the sum of an increasing function of the absolute or relative values of the agents.  These families include, as special cases, the \emph{utilitarian} rule (maximizing the sum of utilities) and the \emph{Nash-optimal} rule (maximizing the sum of log of utilities).
We prove necessary and sufficient conditions for such rules to be essentially-single-valued (ESV) --- recommend a unique utility-profile --- as well as for monotonicity and proportionality. Based on these conditions, we prove that the Nash-optimal rule is the only rule in these families that is essentially-single-valued, resource-monotonic and proportional.
Moreover, this rule is also population-monotonic and envy-free.
This solves an open question posed by \citet{Berliant1992Fair}\footnote{"...there are a number of important issues that should be tackled next pertaining, in particular, to the existence of selections from the no-envy solution satisfying additional properties, Examples are monotonicity with respect to the amount to be divided (all agents should benefit from such an increase), and with respect to changes in the number of claimants (all agents initially present should lose in such circumstances)." \citep{Berliant1992Fair}}.

Then (\textbf{Section \ref{sec:ceei-nash}})
we focus on another family of rules,
related to the famous rule of \emph{competitive equilibrium from equal incomes (CEEI)}.
This rule is known to be envy-free and Pareto-optimal in other domains, and \citet{Weller1985Fair} proved that it exists in cake-cutting too. However, we prove that a CEEI by Weller's definition (which we call WCEEI) is not Pareto-optimal and not ESV. A stronger variant of this definition, which we call PCEEI, is Pareto-optimal, but still not ESV. We present an even stronger variant, which we call SCEEI, and prove that is Pareto-optimal and ESV, as well as RM, PM, envy-free and proportional. Moreover, we prove that SCEEI is in fact identical to the Nash-optimal rule, and note that this equivalence does not hold for its weaker variants.

Finally (\textbf{Section \ref{sec:leximin}}), we study a third family of rules, based on the \emph{leximin} principle (maximizing the smallest value, then the next-smallest value, etc.)
We study two rules in this family: one based on absolute values and the other on relative values. We show that the absolute-leximin rule is RM but not proportional, while the relative-leximin rule is proportional but not RM.

Our work shows that the Nash-optimal rule is ideal for fair cake cutting.
Other works, published independently and contemporaneously to our work, prove other desirable properties of the Nash-optimal-CEEI rule for indivisible item assignment \citep{Caragiannis2016Unreasonable}, public decision making \citep{Conitzer2017} and  homogeneous resource allocation \citep{Bogomolnaia2016Competitive}. The combined evidence of the these works shows that the Nash-optimal rule may be the most fair allocation rule in various settings.

\section{Related Work}\label{sec:literature}
The cake-cutting problem originates from the work of the Polish mathematician Hugo Steinhaus and his students Banach and Knaster \citep{Steinhaus1948}. Their primary concern was how to divide the cake in a fair way. Since then, game theorists analyzed the strategic issues related to cake-cutting, while computer scientists were focusing mainly on how to
compute
solutions efficiently. See \citet{Branzei2015Computational,Procaccia2015Cake} for recent reviews.

Monotonicity issues have been extensively studied with respect to cooperative game theory \citep{Calleja2012}, political representation \citep{Balinski1982}, computer resource allocation \citep{Ghodsi2011Dominant}, single-peaked preferences \citep{Sonmez1994} and other fair division problems. Extensive reviews of monotonicity axioms can be found in chapters 3, 6 and 7 of \citep{Moulin2004Fair} and in chapter 7 of \cite{Thomson_2011}. To the best of our knowledge, the present paper is the first that studies these properties in a cake-cutting setting.

Experimental studies show that people value certain fairness criteria more than others. \cite{Herreiner2009} demonstrated that people are willing to sacrifice Pareto-efficiency in order to reach an envy-free allocation. To our knowledge no study was ever conducted to unfold the relationship between monotonicity and efficiency or proportionality. However some indirect evidence points toward that monotonicity of the solution is in some cases as important as proportionality. The so called \emph{apportionment problem}, where electoral seats have to be distributed among administrative regions provides the most notorious examples. The seat distribution of the US House of Representatives generated many monotonicity related anomalies in the last two centuries. The famous Alabama-paradox, as well as the later discovered population and new state paradoxes pressed the legislators to adopt newer and newer apportionment rules. The currently used seat distribution method is free from such anomalies, however it does not satisfy the so called Hare-quota, a basic guarantee of proportionality \citep{Koczy2017}. We view this as an evidence that monotonicity is as important fairness axiom as the classic axioms of proportionality and envy-freeness.

\cite{Thomson1997Replacement} defines the \emph{replacement principle}, which requires that, whenever any change happens in the environment, the welfare of all agents not responsible for the change should be affected in the same direction --- they should all be made at least as well off as they were initially or they should all be made at most as well off. This is the most general way of expressing the idea of \emph{solidarity} among agents. The PM and RM axioms are special cases of this principle.

The consistency axiom \citep{Young1987,Thomson_2012} resembles population-monotonicity since in both axioms the set of agents changes. However it is fundamentally different as it assumes that leaving agents take their fair shares with them.

\cite{Arzi2012Cake,Arzi2016Toss} study the "dumping paradox" in cake-cutting. They show that, in some cakes, discarding a part of the cake improves the total social welfare of any envy-free division. This implies that enlarging the cake might decrease the total social welfare. This is related to resource-monotonicity; the difference is that in our case we are interested in the welfare of the individual agents and not in the total social welfare.

\cite{Chambers_2005} studies a related cake-cutting axiom called "division independence": if the cake is divided into sub-plots and each sub-plot is divided according to a rule, then the outcome should be identical to dividing the original cake using the same rule. He proves that the only rule which satisfies Pareto-optimality and division independence is the utilitarian-optimal rule --- the rule which maximizes the sum of the agents' utilities. Unfortunately, this rule does not satisfy the fairness axioms of proportionality and envy-freeness.

\cite{Walsh2011Online} studies the problem of "online cake-cutting", in which agents arrive and depart during the process of dividing the cake. He shows how to adapt classic procedures like cut-and-choose and the Dubins-Spanier in order to satisfy online variants of the fairness axioms. Monotonicity properties are not studied, although the problem is similar in spirit to the concept of population-monotonicity. \cite{Kash2014} also study a dynamic resource allocation setting, but they deal with multiple, homogeneous, divisible resources. The authors assume that participants are added sequentially, but resources allocated to existing participants cannot be taken back, which can be viewed as a stronger form of resource monotonicity.
\citet{segal2016re} studies a related problem of how to re-divide a cake when new agents join the scene, while balancing fairness for the new agents with the ownership rights of the old agents.

\section{Model}  \label{sec:model}
\subsection{Cake-cutting}
A cake-cutting problem is a triple $\Gamma(N,C,(\vabs_i)_{i \in N})$ where:

\begin{itemize}
  \item $N=\{1,2,\dots,n\}$ denotes the set of agents who participate in the cake-cutting process. In examples with a small number of agents, we often refer to them by names (Alice, Bob, Carl...).
  \item $C$ is the cake. We assume that $C$ is an interval, $C=[0,c]$ for some real number $c$.
  \item $\vabs_i$ is the value measure of agent $i$. It is a non-negative real-valued function defined on the Borel subsets of $[0,\infty)$. We assume that the value of every finite interval is finite.
  As the term ``measure'' implies, $\vabs_i$ is countably additive: the value measures of a countable union of disjoint subsets is the sum of the values of the subsets.

\end{itemize}


We call a Borel subset of $C$ a \emph{slice}.
A slice with a positive value for at least one agent is called a \emph{positive slice}.
A slice allotted to an agent is called a \emph{piece}.

We assume that the value measures are nonatomic. This means that any point on the interval is worth 0 for all agents\footnote{
It is sometimes assumed that the value-measures are absolutely-continuous with respect to Lebesgue measure. This means that any slice with zero length has zero value for everyone. This is equivalent to the assumption that each value-measure is the integral of some ``value-density'' function, describing the value per unit of length. The absolute-continuity assumption is strictly stronger than our non-atomicity assumption; see \citet{hill2010cutting} and \citet{schilling2016continuity}.
}. We also assume that each agent values the entire cake as positive.
All these assumptions are standard in the cake-cutting literature.

Our model diverges from the standard cake-cutting setup in that we do not require the value measures to be normalized. That is, the value of the entire cake is not necessarily the same for all agents. This is important because we examine scenarios where the cake changes, so the cake value might become larger or smaller. Hence, we differentiate between \emph{absolute} and \emph{relative} value measures:
\begin{itemize}
\item The \textbf{absolute} value measure of the entire cake, $\vabs_i(C)$, can be any positive value and it can be different for different agents.
\item The \textbf{relative} value of the entire cake is 1 for all agents.
Relative value measures are denoted by $\vrel_i$ and defined by:
$\vrel_i(X):={\vabs_i(X) / \vabs_i(C)}$.
\end{itemize}

It is also common to assume that value measures are private information of the agents. This question leads us to whether agents are honest about their preferences. While cake-cutting problems can be studied from a strategic angle \citep{Branzei2016Algorithmic}, here we will not analyze the strategic behavior of the agents and assume that their valuations are known.

A \emph{division} is a partition of the cake into $n$ pairwise-disjoint pieces, $X=(X_1,\dots, X_n)$, such that $X_1\cup\cdots\cup X_n=C$.

A \emph{division rule} is a correspondence that takes a cake-cutting problem as input and returns a division or a set of divisions.


A division rule $R$ is called \emph{essentially single-valued (ESV)} if $X,Y \in R(\Gamma)$ implies that for all $i \in N$, $\vabs_i(X_i)=\vabs_i(Y_i)$. That is, even if $R$ returns a set of divisions, all agents are indifferent between these divisions.

The classic requirements of fair cake-cutting are the following. A division $X$ is called:

\begin{itemize}
  \item \emph{Pareto-optimal} (PO) if there is no other division which is weakly better for all agents and strictly better for at least one agent.
  \item \emph{Proportional} (PROP) if each agent gets at least $1/n$ fraction of the cake according to his own evaluation, i.e.\ for all $i \in N$, $\vrel_i(X_i)\geq 1/n$. Note that the  definition uses relative values.
  \item \emph{Envy-free} (EF) if each agent gets a piece which is weakly better, for that agent, than all the other agents' pieces: for all $i,j \in N$, $\vrel_i(X_i)\geq \vrel_i(X_j)$. Note that here it is irrelevant whether absolute or relative values are used. Note that, since the entire cake is divided, EF implies PROP.
\end{itemize}

A division rule is called \emph{Pareto-optimal} (PO) if it returns only PO divisions. The same applies to proportionality and envy-freeness.

\subsection{Monotonicity}
We now define the two monotonicity properties. In the introduction we defined them informally for the special case in which the division rule returns a single division. Our formal definition is more general and applicable to rules that may return a set of divisions.

The first two definitions relate to \emph{resource-monotonicity} (RM).

\begin{definition}
Let $N$ be a fixed set of agents, $C=[0,c]$, $C'=[0,c']$ two cakes where $c<c'$, and $(\vabs_i)_{i \in N}$ value measures on $[0,\infty)$. Then the cake-cutting problem $\Gamma'=(N, C', (\vabs_i)_{i \in N})$ is called a \emph{cake-enlargement} of $\Gamma=(N, C, (\vabs_i)_{i \in N})$. The set of cake-enlargements of $\Gamma$ is denoted $\textsc{CakeEnlargements}(\Gamma)$.
\end{definition}

We enlarge the cake from the right-hand side for practical reasons, but this fact does not have any significance from a theoretical point of view (the theorems are valid no matter where the enlargement is placed).


\begin{definition}
A division rule $R$ is called:

(a) \emph{Upwards RM} --- if for every $\Gamma$, every $\Gamma'\in \textsc{CakeEnlargements}(\Gamma)$ and every division $X\in R(\Gamma)$ there \emph{exists} a division $Y \in R(\Gamma')$ such that $\vabs_i(Y_i) \geq \vabs_i(X_i)$ for all $i \in N$ (all agents are weakly better-off when the cake is larger).

(b) \emph{Downwards RM} ---
if for every $\Gamma$, every $\Gamma'\in \textsc{CakeEnlargements}(\Gamma)$ and every  $Y\in R(\Gamma')$ there \emph{exists} a division $X \in R(\Gamma)$ such that $\vabs_i(X_i) \leq \vabs_i(Y_i)$ for all $i \in N$ (all agents are weakly worse-off when the cake is smaller).

(c) A division rule is RM if it is both upwards RM and downwards RM.
\end{definition}

The next two definitions relate to \emph{population-monotonicity} (PM).

\begin{definition}
Let $C$ be a fixed cake, $N$ and $N'$ two sets of agents such that $N\supset N'$ and $(\vabs_i)_{i \in N}$ their value measures.
Then the cake-cutting problem $\Gamma'=(N', C, (\vabs_i)_{i \in N'})$ is called a \emph{population-reduction} of $\Gamma=(N, C, (\vabs_i)_{i \in N})$. The set of population-reductions of $\Gamma$ is denoted $\textsc{PopReductions}(\Gamma)$.
\end{definition}

\begin{definition} A division rule $R$ is called:

(a) \emph{Upwards PM}, if for every $\Gamma$,
every $\Gamma'\in\textsc{PopReductions}(\Gamma)$ and every division $Y \in R(\Gamma')$, there exists a division $X \in R(\Gamma)$ such that $\vabs_i(X_i) \leq \vabs_i(Y_i)$ for all $i \in N'$ (all agents who participate in both divisions are weakly worse-off when a new agent joins the process).

(b) \emph{Downwards PM}, if for every $\Gamma$,
every $\Gamma'\in\textsc{PopReductions}(\Gamma)$ and every division $X \in R(\Gamma)$, there exists a division $Y \in R(\Gamma')$ such that $\vabs_i(Y_i) \geq \vabs_i(X_i)$ for all $i \in N'$ (all agents who participate in both divisions are weakly better-off if someone leaves).

(c) A division rule is \emph{population-monotonic} (PM), if it is both upwards and downwards population-monotonic.
\end{definition}

\begin{remark}
As usual in the literature, the monotonicity axioms care only about absolute values. It is not considered a violation of RM if the relative value of an agent decreases when the cake grows.
\end{remark}

\begin{remark} For essentially-single-valued solutions, downwards resource (or population) monotonicity implies upwards resource (or population) monotonicity and vice versa. Set-valued solutions, however, may satisfy only one direction of these axioms.
\end{remark}

\begin{remark}
The monotonicity axioms in \cite{Thomson_2011} require that \emph{all} divisions in $R(\Gamma)$ should be weakly better/worse than all divisions in $R(\Gamma')$.
In contrast, our definition,
which originates from cooperative game theory \citep{Peleg2007},
only requires that there \emph{exists} such a division.
Clearly, our monotonicity condition is weaker and it is implied by Thomson's monotonicity.
The rationale behind the weaker definition
is that even if a set-valued solution is used, only a single allocation will be implemented. Hence, the divider can be faithful to the monotonicity principles even if the rule suggests many non-monotonic allocations as well.
However,
the ``protagonist'' of the present paper
---
the Nash-optimal rule
---
is essentially-single-valued (Corollary \ref{cor:nash}); for such rules, the two definitions coincide.

\end{remark}

\subsection{Examples}
\label{sec:examples}

In our first example we show that Cut and Choose, the most widely applied division method, is not resource monotonic. Suppose Alice and Bob want to divide a one-dimensional land-plot, e.g. a river-bank, that is modeled by the cake $C_1=[0,5]$.
Alice's valuation of the land increases the more eastwards we go. Her valuation is given by (upper curves in Figure \ref{nhc}):
\begin{align*}
\vabs_A([0,x])&=x^2+7x, &\frac{d\hat{v}_A([0,x])}{d x}&=2x+7.
\end{align*}
Bob likes the lands adjacent to $x={7 / 2}$. His valuation is given by (lower curves in Figure \ref{nhc}):
\begin{align*}
\vabs_B([0,x])&=3\arctan\left(4x-14\right)+3\arctan(14), &\frac{d\hat{v}_B([0,x])}{d x}&=\frac{12}{16\left(x-\frac{7}{2}\right)^2+1}.
\end{align*}
They play Cut and Choose: Alice cuts the cake into two pieces, Bob chooses the piece most valuable for him and Alice keeps the other one. Figure~\ref{nhc} illustrates the process. On the right hand side, the function $\vabs([0,x])$ represents the cumulative utility of the agents, i.e.\ the value of the piece that lies left to the point $x\in \mathbb{R}$. On the left hand side, we plotted the derivative of this function, which can be interpreted as a ``value density''.

Alice cuts $C_1$ at 3, as the pieces $[0,3]$ and $[3,5]$ worth the same for her. It is clear from the figure that the most valuable part of the cake for Bob lies right to the cut point, from which Bob obtains a utility of $\int_{3}^{5} \frac{12}{16\left(x-\frac{7}{2}\right)^2+1}d\mu\approx7.53$.

Now, new lands become available to the east of the existing lands, so the cake is now $C_2=[0,6]$. Alice cuts $C_2$ around 3.65 and Bob is forced to choose between the pieces $[0,3.65]$ and $[3.65,6]$. Since $\int_{0}^{3.65} \frac{12}{16\left(x-\frac{7}{2}\right)^2+1}d\mu\approx6.11$ and $\int_{3.65}^{6} \frac{12}{16\left(x-\frac{7}{2}\right)^2+1}d\mu\approx2.79$ Bob loses utility no matter which one he chooses.

\begin{figure}
  \centering
  \includegraphics[width=12.5cm]{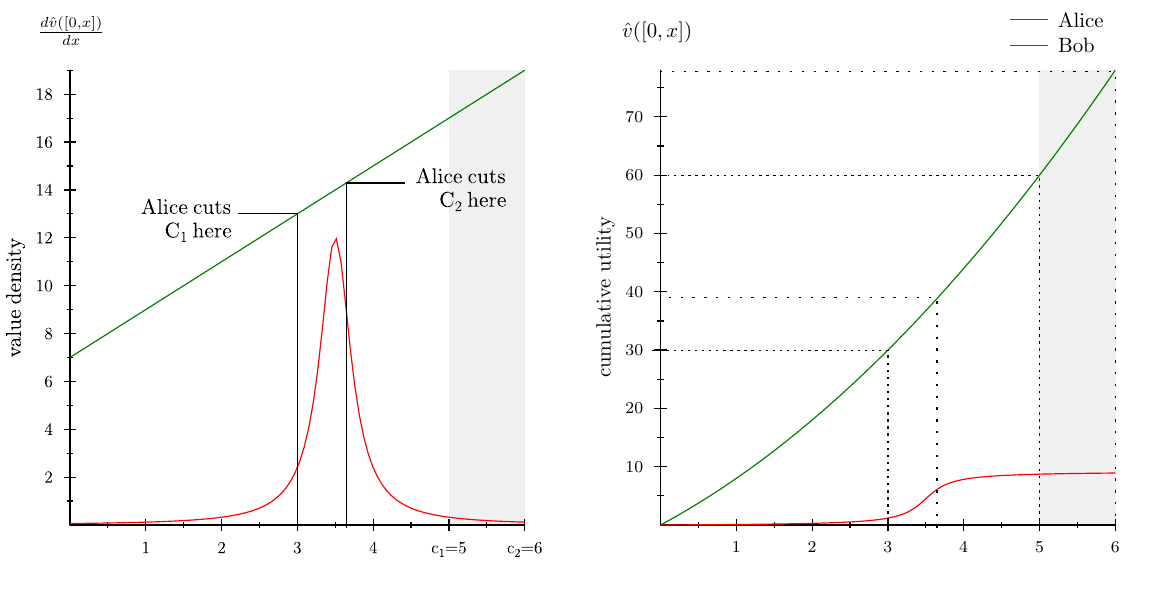}\\
  \caption{Cut and Choose is not RM.}\label{nhc}
\end{figure}


This simple counterexample implies that the Banach-Knaster and the Fink methods are not resource-monotonic either, since they both produce the same division on the above cake as Cut and Choose\footnote{A more recent protocol, the Recursive Cut and Choose, proposed by \cite{Tasnadi2003}, violates resource-monotonicity for the same reason.}. It is an easy exercise to show that Bob loses utility with the Dubins-Spanier or the Even-Paz methods as well.

\begin{remark}
\citet{ourArxivPaperConnected} show that the Banach-Knaster, Cut and Choose, Dubins-Spanier, Even-Paz, Fink, and Selfridge-Conway methods are neither RM nor PM. Some of these methods can be made monotonic by using a special ordering: the Fink method is upwards-PM if the new agent is the last who chooses slices; the Cut and Choose protocol is RM if the agents are ordered by their cut marks and the one who has the rightmost cut mark cuts the cake
\footnote{this variant is called the \emph{Rightmost Mark} rule \citep{ourArxivPaperConnected}.}
.
\end{remark}

The rest of the paper will feature \emph{piecewise homogeneous} cakes. A piecewise-homogeneous cake is a finite union of disjoint intervals, such that inside each interval $I_j$, every agent $i$ values each subset $Z\subseteq I_j$ as $(a_{i,I_j}\cdot \operatorname{length}(Z))$, where the $a_{i,I_j}$ values are constants. In such cases, the cumulative utility of the agents, $\vabs([0,x])$ is a piecewise-linear function (see Figure \ref{plvf4}). Although piecewise-homogeneous cakes lack a certain cake cutting flavour, their simplicity makes them ideal for illustration purposes. We stress that our theorems hold for arbitrary cakes --- not only for piecewise-homogeneous ones.

\begin{figure}
  \centering
  \includegraphics[width=12.5cm]{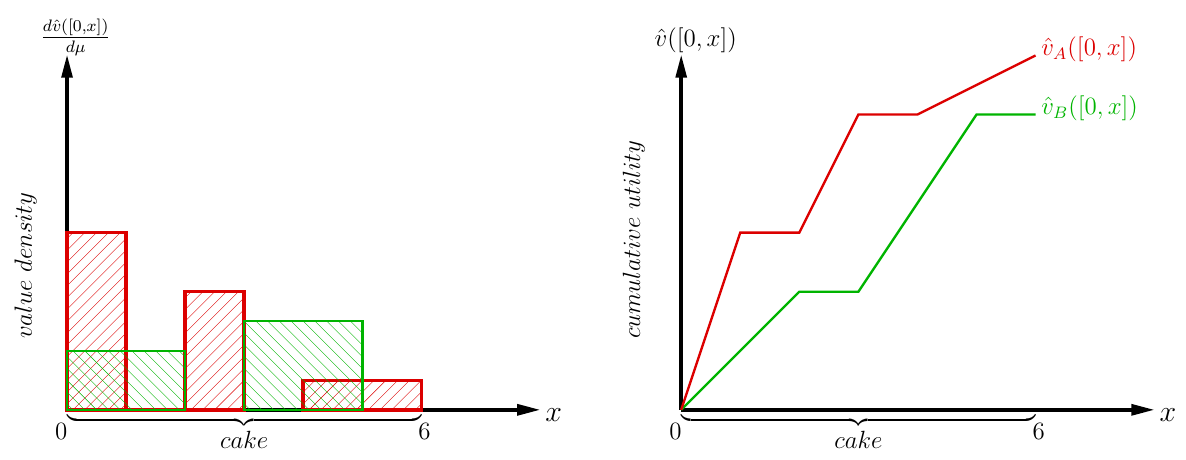}\\
  \caption{Piecewise homogeneous cake with two players. Note that the value measures are not normalized, hence $\vabs_A([0,6])\neq \vabs_B([0,6])$.}\label{plvf4}
\end{figure}

Piecewise homogeneous cakes can be represented by a table containing the values of the agents on the different intervals. For example the cake in Figure \ref{plvf4} can be presented as follows (note that each interval has a length of 1):

\begin{center}
  \centering
  \begin{tabular}{|c|c|c|c|c|c|c|}
    \hline
    $\vabs_A$ & 6 & 0 & 4  & 0 & 1  & 1 \\
    \hline
    $\vabs_B$ & 2 & 2 & 0  & 3 & 3 & 0 \\
    \hline
  \end{tabular}
\end{center}

Throughout the paper the $\blacktriangledown$ sign over a column indicates the enlargement.

\section{Welfare-Maximizing Rules} \label{sec:maximizers}
In this section we search for division rules satisfying the desired properties of PROP, EF, RM, PM and PO.
Since Pareto-optimality is a basic requirement,
we confine our search to two families of division-rules that are PO by design --- the families of \emph{welfare maximizing} rules. A welfare-maximizing rule is parametrized by a strictly increasing function $w: \mathbb{R}^+\to \mathbb{R}\cup\{-\infty,+\infty\}$. We limit our attention to welfare functions that are twice continuously differentiable. We also assume that $w$ is independent of $n$; such consistency requirement is natural when investigating rules defined over varying population or resource sizes.

Given a function $w$, it is possible to define two different social welfare maximizers:

\begin{itemize}
\item The \emph{\wmrel{w}} rule selects all allocations that maximize the \emph{relative-welfare} function $\wrel_w(X) := \sum_{i=1}^n w(\vrel_i(X_i))$;
\item The \emph{\wmabs{w}} rule selects allocations that maximize the \emph{absolute-welfare} function $\wabs_w(X) := \sum_{i=1}^n w(\vabs_i(X_i))$.
\end{itemize}
Such social-welfare functions satisfy several reasonable axioms
\cite[pages 66-69]{Moulin2004Fair}.
The following lemma shows that these rules are well-defined.
\begin{lemma}
\label{lem:existence}
For every continuous function $w$, there exists an allocation that maximizes $\wrel_w$ and an allocation that maximizes  $\wabs_w$.
\end{lemma}
\begin{proof}
For every allocation $X$, let $M(X)$ be its \emph{relative-value-matrix} --- a matrix $M$ where  $\forall i,j: M_{i,j} = \vrel_i(X_j)$.
Let $\mathbb{M}$ the set of all matrices that correspond to allocations of the cake:
\begin{align*}
\mathbb{M} :=
\{
	M(X)
	| X \text{ is a cake-allocation}
\}
\end{align*}
Theorem 1 of \citet{Dubins_1961}, which is a special case of a theorem of \citet{dvoretzky1951relations}, implies that, if all value-measures are non-atomic (as we assume throughout the paper), then
$\mathbb{M}$ is a compact and convex set of matrices. By compactness, for each continuous function there exists a matrix $M\in \mathbb{M}$ that maximizes it.
This is true in particular for the continuous function
$U_w(M) := \sum_{i=1}^n w(M_{i,i})$. Every matrix $M\in \mathbb{M}$ that maximizes $U_w$ is a relative-value-matrix of an allocation $X$ that maximizes $\wrel_w$.

An analogous proof applies to $\wabs_w$.
\end{proof}
Since $w$ is strictly increasing, the \wmabs{w} and \wmrel{w} rules are trivially Pareto-optimal.

Probably the most famous rules in the family of welfare-maximizers are the \emph{utilitarian} rules, maximizing the sum of values. They are attained when $w$ is the identity function, $w(v)=v$.

As we will see in this section, the properties of welfare-maximizing rules crucially depend on the level of concavity of the function $w$. Specifically, we will be interested in whether $w$ is convex ($w'(x)$ is increasing) or concave ($w'(x)$ is decreasing). Additionally, we are interested in the following sub-class of the strictly concave functions:

\begin{definition}
\label{def:hyper-concave}
A differentiable function $w$ is \emph{hyper-concave} if $x w'(x)$ is decreasing.
\end{definition}
Note that for strictly increasing functions hyper-concavity implies strict-concavity,
but the opposite is not true\footnote{For example: $w(x)=\sqrt{x}$ is strictly concave but not hyper-concave.}.

In the following subsections we will examine the properties of rules from the two welfare-maximizer families. We will prove that:
\begin{itemize}
\item Both \wmabs{w} and \wmrel{w} rules are essentially-single-valued whenever $w$ is strictly-concave.
\item The \wmabs{w} rules are resource-monotonic whenever $w$ is concave.
\item The \wmrel{w} rules are proportional whenever $w$ is hyper-concave.
\end{itemize}
The link between concavity and proportionality is not surprising. Intuitively, if $w$ is strictly concave, then giving an additional unit of utility to an agent produces less and less social welfare as the agent's utility increases. For instance if we divide a cake among two agents, and there is a small slice that is worth approximately the same for the two agents, then a strictly concave $w$-maximizer will give that slice to the 'poorer' agent, while a strictly convex $w$-maximizer will give it to the 'richer' agent.

The link between concavity and resource-monotonicity seems more surprising at first.
The proof of Theorem \ref{thm:abs-wconcave-rm} provides some intuition about this relationship.


\subsection{Essentially-single-valuedness}
In this section we prove that strict concavity of $w$ is a sufficient and ``almost'' necessary condition for $w$-maximizer rules being essentially-single-valued.

\begin{theorem}
\label{thm:esv}

(a) If $w$ is strictly-concave, then both the \wmabs{w} and the \wmrel{w} are ESV.

(b) If $w$ is not strictly-concave, then the \wmabs{w} is not ESV.

(c) If $w$ is not strictly-concave in $[0,1]$, and the \wmrel{w} is proportional, then the \wmrel{w} is not ESV.
\end{theorem}

\begin{proof}[Proof of Theorem \ref{thm:esv}(a)]
Suppose $w$ is strictly-concave.
Define $\mathbb{M}$ as in the proof of Lemma \ref{lem:existence}; as mentioned there, $\mathbb{M}$ is a compact and convex set of matrices. By convexity, every strictly-concave function is maximized by a single matrix $M\in \mathbb{M}$.
This is particularly true for the function
$U_w(M):= \sum_{i=1}^n w(M_{i,i})$, which is strictly-concave since $w$ is.
Let $M^*$ be the unique matrix in $\mathbb{M}$ that maximizes $U_w$. Then, in every allocation that maximizes $\wrel_w$, the relative-value-matrix is $M^*$. Hence the \wmrel{w} is ESV.

An analogous proof applies to $\wabs_w$ and the \wmabs{w}.
\end{proof}
The proofs of parts (b) and (c) require the following technical lemma:
\begin{restatable}{lemma}{intervallemma}
\label{lem:interval}
Let $w$ be a twice-continuously-differentiable function.
\begin{enumerate}
  \item If $w$ is not strictly-concave in some interval $[a,b]$ with $a<b$, then $w$ is convex in some sub-interval $[s,t]$ with $a\leq s<t \leq b$.
  \item If $w$ is not concave in some interval $[a,b]$ with $a<b$, then $w$ is strictly-convex in some sub-interval $[s,t]$ with $a\leq s<t \leq b$.
\end{enumerate}
\end{restatable}
\begin{proof}
See Appendix.
\end{proof}
The maximum of a convex function in an interval is attained in an endpoint of the interval. Hence, when $w$ is convex in $[s,t]$, we can create scenarios where the welfare functions are maximized both when the utility-vector is $s,t$ and when it is $t,s$.
\begin{proof}[Proof of Theorem \ref{thm:esv}(b)]
Suppose $w$ is not strictly-concave and let $[s,t]$ be an interval in which it is convex, with $s<t$.
Consider the following cake:
	\begin{center}
		\begin{tabular}{l|c|c|cccc|c}
			\hline \c{$\vabs_A$}    & \c{$s$} & \c{0} & \c{$t-s$}  \\
			\hline \c{$\vabs_B$}    & \c{0} & \c{$s$} & \c{$t-s$}  \\
			\hline
		\end{tabular}
	\end{center}
By Pareto-optimality, slice \#1 goes to Alice and slice \#2 goes to Bob. Let $x$ be the fraction of slice \#3 that goes to Alice, so that Alice's value is $s(1-x)+tx$ and Bob's value is $sx+t(1-x)$. The social welfare as a function of $x$ is:
\begin{equation*}
	F(x)=w\bigg(s(1-x)+tx)\bigg)+w\bigg(sx+t(1-x))\bigg).
\end{equation*}
Since $w$ is convex on $[s,t]$, $F(x)$ is also convex, so its maximum is attained either at $x=0$ or at $x=1$. But $F(0)=F(1)=w(s)+w(t)$, so both must be maximum points. But, in these two maximum points, Alice and Bob receive different values (Alice receives more when $x=1$ and Bob receives more when $x=0$). Thus the \wmabs{w} is not ESV.
\end{proof}

\begin{proof}[Proof of Theorem \ref{thm:esv}(c)]
Suppose $w$ is not strictly-concave in $[0,1]$ and let $[s,t]$ be an interval in which it is convex, with $0\leq s<t\leq 1$.
Suppose also that the \wmrel{w} is proportional.
Consider the following cake, which has to be divided among $k+2$ agents --- Alice, Bob and the so called \emph{complementary agents} $C_1,\dots,C_{k}$ (where $k$ is sufficiently large such that ${2\over k}<t-s$):
	\begin{center}
		\begin{tabular}{l|c|c|cccc|c}
			\hline \c{$\vabs_A$}    & \c{$s$} & \c{0} & \c{$t-s-\frac{2}{k}$}& \c{$1-t+\frac{2}{k}$}  \\
			\hline \c{$\vabs_B$}    & \c{0} & \c{$s$} & \c{$t-s-\frac{2}{k}$}& \c{$1-t+\frac{2}{k}$}  \\
			\hline \c{$\vabs_{C_1}$} & \c{0} &\c{0}& \c{0} & \c{1}             \\
			\hline \c{...}          & \c{0} &\c{0}& \c{0} & \c{1}             \\
			\hline \c{$\vabs_{C_{k}}$} & \c{0} &\c{0}& \c{0} & \c{1}             \\
			\hline
		\end{tabular}
	\end{center}
By proportionality, each of the complementary agents must receive at least $1/(k+2)$ fraction of slice \#4. Thus
only a fraction ${2\over k+2}$ of this slice remains for Alice and Bob.
Both Alice and Bob value the remnants of slice \#4 as at most
\begin{align*}
\frac{2(1-t+\frac{2}{k})}{k+2}\le \frac{2}{k+2}<\frac{2}{k}.
\end{align*}
Let us denote slice \#3 and the remnants of slice \#4 by $H$.
Thus, both Alice and Bob value $H$ as $t-s-\varepsilon$, for some $\varepsilon>0$.

Fix a division of slice \#4 among the complementary agents. Let $W_C:=\sum_{i=1}^k w(\vrel_{C_i}(X_{C_i})) $ be the contribution of the complementary agents to the total welfare.
Note that $H$ is the only part of the remaining cake that both Alice and Bob find valuable, moreover both of them evaluate $H$ as $t-s-\epsilon$. Let $x \in [0,1]$ denote the proportion of $H$ that is given to Alice. Then, the social welfare as a function of $x$ is:

\begin{equation*}
F(x)=w\big(s+x(t-s-\varepsilon)\big)+w\big(s+(1-x)(t-s-\varepsilon)\big)+W_C.
\end{equation*}
Since $w$ is convex on $[s,t]$, $F(x)$ is also convex, so the maximum is attained either at $x=0$ or at $x=1$. But $F(0)=F(1)$, so both must be maximum points, implying that \wmrel{w} is not ESV.
\end{proof}

\begin{remark*}
We do not know whether Theorem \ref{thm:esv}(c) holds without the proportionality assumption, and whether it holds with a fixed number of agents.
\end{remark*}

\subsection{Resource-monotonicity}
In this section we prove that for absolute $w$-maximizers, weak concavity of $w$ is a necessary and sufficient condition for RM.

\begin{theorem}
\label{thm:abs-wconcave-rm}
The \wmabs{w} rule is resource-monotonic if and only if $w$ is concave.
\end{theorem}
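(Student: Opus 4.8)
The plan is to prove both directions by exploiting the characterization of optimal allocations through "marginal welfare densities." Since the cake is (essentially) a measure space and $w$ is twice continuously differentiable, an allocation maximizes $\widehat{W}_w(X) = \sum_i w(\widehat{v}_i(X_i))$ only if there is no beneficial infinitesimal transfer of cake-mass between agents. Concretely, at an optimal allocation each infinitesimal slice $dz$ with positive density for several agents goes to the agent $i$ maximizing $w'(\widehat{v}_i(X_i)) \cdot (\text{density of } i \text{ at } z)$. This gives a "water-filling"/duality picture: there exist dual weights (the $w'(\widehat{v}_i(X_i))$) that govern how mass is allocated.

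For the \textbf{sufficiency} direction ($w$ concave $\Rightarrow$ RM), I would argue as follows. Let $\Gamma'$ be a cake-enlargement of $\Gamma$, and let $X$ be $\widehat{v}$-optimal for $\Gamma$. I want a $\widehat{v}$-optimal $Y$ for $\Gamma'$ with $\widehat{v}_i(Y_i) \ge \widehat{v}_i(X_i)$ for all $i$. Consider the path of optimal allocations as the cake grows continuously from $C$ to $C'$ (parametrize by the right endpoint $c$, or more robustly by a scalar $\lambda \in [0,1]$ interpolating the "available mass"). The key monotonicity claim is that along this path, each agent's utility $\widehat{v}_i(X_i(\lambda))$ is nondecreasing in $\lambda$. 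To see this, suppose at some moment an extra bit of cake becomes available; it gets assigned to whichever agent currently has the highest marginal welfare density. That agent's utility goes up, which (by concavity) \emph{lowers} that agent's $w'$ value, bringing it back toward the others — this is a self-correcting/stabilizing dynamic, so no agent is ever pushed to give mass away. Making this rigorous is the crux: I would either (i) set up the KKT/complementary-slackness conditions and show that the optimal utility vector, as a function of total available resource, moves in a monotone way because the feasible set grows and the objective is concave and "resource-separable" in the right sense; or (ii) give a direct exchange argument — if $Y$ is optimal for $\Gamma'$ but $\widehat{v}_j(Y_j) < \widehat{v}_j(X_j)$ for some $j$, use the extra cake $C' \setminus C$ together with a careful swap to build a feasible allocation for $\Gamma'$ that strictly beats $Y$, contradicting optimality, with concavity of $w$ ensuring the swap raises total welfare. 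The downward direction is symmetric (restrict $Y$ optimal for $\Gamma'$, remove the mass in $C' \setminus C$, redistribute the freed valuable mass by the same marginal-density rule, and check utilities only drop).

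For the \textbf{necessity} direction ($w$ not concave $\Rightarrow$ not RM), I would build an explicit counterexample analogous to the cake used in Lemma~\ref{lem:abs-convex-not-esv}. By Lemma~\ref{lem:interval}, $w$ is convex on some interval $[s,t]$ with $0<s<t$. Construct a small cake on which, by Pareto-optimality and the convexity of $w$, the \wmabs{w} rule must give a contested middle slice entirely to one agent (say Alice), pushing Alice's value to $t$ and Bob's to $s$; then enlarge the cake with a slice valuable only to Bob, arranged so that in the enlarged problem the unique optimum (or every optimum) now gives the middle slice to Bob instead (because Bob's added utility changes which corner of the convex function $F$ is the maximizer), leaving Alice strictly worse off than before. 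I would need to choose the magnitudes so that the corner-flip is forced, and handle the set-valuedness (when $w$ is linear on part of $[s,t]$ there are ties) by making the added slice strictly tip the balance so that \emph{no} optimal allocation of $\Gamma'$ keeps Alice at her old value.

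The \textbf{main obstacle} is the sufficiency direction — specifically, proving that the optimal utility vector is monotone in the available resource. The clean statement is intuitively a consequence of concavity, but cake-cutting optima are not unique in general and the "path of optima" need not be unique or continuous, so I expect the honest argument to route through the marginal-density/duality characterization of optima (every optimal allocation has the same utility vector when $w$ is strictly concave, and for merely concave $w$ one picks a canonical optimum), plus a lemma that adding resource weakly raises every coordinate of that vector. The intuition the authors allude to — "adding cake only ever makes the marginal-welfare ordering self-correct, never forces a transfer away from an agent" — is exactly what the formal argument must pin down.
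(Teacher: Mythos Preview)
Your necessity direction is essentially the paper's: exploit an interval of strict convexity to build a two-agent cake where the contested slice goes entirely to Alice (the corner maximum of a convex $F$), then add a slice valuable only to Bob so the maximizing corner flips and Alice strictly loses. The paper makes this concrete with parameters $s,t,\epsilon$ and the inequality $w(s)+w(s+2t)>w(s+t)+w(s+t+\epsilon)$, but your sketch captures the idea correctly.

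The sufficiency direction has a genuine gap. Your option (ii) --- ``if some agent is unlucky in the new optimum $Y$, swap to strictly beat $Y$'' --- does not work for merely concave $w$. When $w$ is not strictly concave the optimum is not essentially single-valued, and there \emph{can} exist optimal $Y$ for the enlarged cake in which some agent is strictly worse off than in $X$; you cannot contradict the optimality of that particular $Y$. The RM definition only asks that \emph{some} optimal $Y'$ make everyone weakly better off, so the right move is not to contradict $Y$ but to transform it. The paper's mechanism for this is a ``bad pair'' combinatorics: call $(i,j)$ bad if $i$ is unlucky, $j$ is lucky, and $X_i\cap Y_j$ is positive. For each bad pair one transfers a carefully sized sub-slice $H^z\subseteq X_i\cap Y_j$ from $j$ back to $i$; concavity of $w$ together with the optimality of $X$ guarantees the transfer \emph{weakly} increases welfare, and the size $z$ is chosen so that neither agent crosses the lucky/unlucky boundary (so no new bad pairs appear). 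Iterating removes all bad pairs and yields an optimal $Y'$; finally, an optimal allocation with no bad pairs cannot have unlucky agents, because one could then restore the unlucky agents' $X$-shares without touching any lucky agent's $Y'$-share, a strict improvement.

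Your option (i) --- a continuous path of optima with a monotone utility vector --- runs into exactly the non-uniqueness you flag, and there is no canonical selection that is obviously continuous. The paper bypasses this entirely with the discrete transfer argument above; that ``bad pair elimination'' lemma is the missing idea in your proposal.
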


\begin{proof}
\textbf{Only if part:}
We assume that $w$ is not concave and prove that the \wmabs{w} is not RM.
a	
By Lemma \ref{lem:interval} there is an interval in which $w$ is strictly convex.
Denote this interval by $[s,s+2t]$, for some $s\geq 0$ and $t>0$. Then, by convexity
	\begin{align*}
	&&
	w(s+2t)-w(s+t) &>w(s+t) -w(s)
	\\
	 \implies &&
	w(s) + w(s+2t) &>w(s+t) + w(s+t).
	\end{align*}
	The continuity of $w$ implies that there exists a small $\epsilon>0$ such that:
	\begin{align}
	\label{eq:w-strictly-convex}
	w(s) + w(s+2t) > w(s+t) + w(s+t+\epsilon).
	\end{align}
	Consider the following cake:
	\begin{center}
		\begin{tabular}{lccccl}
			&         &        &     & $\blacktriangledown$
			\\
			\hline
			$\vabs_A$ & \c{$s$} &  \c{0} & \c{$t+\epsilon$} & \c{\en 0}
			\\
			\hline
			$\vabs_B$ &  \c{0} &  \c{$s$} & \c{$t$} & \c{\en $t$}
			\\
			\hline&
		\end{tabular}
	\end{center}
	Initially the cake is made of only the three leftmost slices. By PO, Alice gets the first slice and Bob gets the second slice. For any $x\in[0,1]$ let $Y(x)$ be the allocation where Alice gets the first slice and $x$ fraction of the third slice. So Alice's absolute value is $s+x\cdot (t+\epsilon)$ and Bob's absolute value is $s+(1-x)\cdot (t)$. The absolute welfare, as a function of $x$, is:
	\begin{align*}
	F(x)=\wabs_w(Y(x)) = w(s+x\cdot (t+\epsilon)) + w(s+ (1-x)\cdot t)
	\end{align*}
When $x\in[0,1]$, the arguments to $w$ in the above expression are all in the range $[s,s+2t]$ in which $w$ is strictly convex. Hence, $F(x)$ is strictly convex in the interval $[0,1]$ so its maximum must be in one of the endpoints: $F(0)=w(s)+w(s+t)$ or $F(1)=w(s+t+\epsilon)+w(s)$. Since $w$ is increasing and $s+t+\epsilon>s+t$, the maximum is attained at $x=1$, so Alice gets all the slice and her value is $s+t+\epsilon$.
	
When the cake grows, PO dictates that the rightmost slice is given to Bob. Again, for any $x\in[0,1]$ let $Y(x)$ denote the allocation where Alice gets the first slice and $x$ fraction of the third slice. Now the absolute welfare is:
\begin{align*}
	G(x)=\wabs_w(Y(x)) = w(s+x\cdot (t+\epsilon)) + w(s+t + (1-x)\cdot t)
\end{align*}
the maximum in the range $[0,1]$ is either $G(0)=w(s)+w(s+2t)$ or $G(1)=w(s+t+\epsilon)+w(s+t)$. Inequality (\ref{eq:w-strictly-convex}) implies that the former is larger so the maximum is at $x=0$, Alice gets nothing from the third slice and her value drops to $s$. This proves that \wmabs{w} is not RM.

\textbf{If part:} we assume that $w$ is concave and prove that the \wmabs{w} is RM.

We start with upwards-RM. Let $X$ be an allocation that maximizes the welfare function $\wabs_w$ on the original cake $C$, and $Y$ an allocation that maximizes $\wabs_w$ on an enlarged cake $C\cup E$. We use the following definitions:
\begin{itemize}
	\item Agent $i$ is \emph{unlucky} if $\vabs_i(Y_i)<\vabs_i(X_i)$;
	\item Agent $j$ is \emph{lucky} if $\vabs_j(Y_j)>\vabs_j(X_j)$;
	\item Agent $k$ is \emph{indifferent} if $\vabs_k(Y_k)=\vabs_k(X_k)$.
	\item A pair of agents $(i,j)$ is a \emph{bad pair} if $i$ is unlucky, $j$ is lucky, and $i$ has conceded a positive slice to agent $j$, i.e, $X_i \cap Y_j$ is positive.
\end{itemize}
If there are no unlucky agents in $Y$, then we are done --- monotonicity is satisfied for all agents.

Moreover, if there are no bad pairs in $Y$, then by giving all unlucky and indifferent agents the original share that they had in $X$, the unlucky agents will be better-off and the other agents will not be harmed. Hence, if $Y$ is welfare-maximizing and has no bad pairs, then $Y$ has no unlucky agents at all, and we are done.

Therefore, to prove upwards-RM, it is sufficient to prove that there exists a division $Y'$ of $C\cup E$ with $\wabs_w(Y') \geq  \wabs_w(Y)$ where there are no bad pairs. Let $(i,j)$ be a bad pair and let $H:=X_i \cap Y_j$. By definition $H$ is positive, so by PO of $X$ and $Y$ it has a positive value for both $i$ and $j$. Let $z\in[0,1]$ be a number defined as:
\begin{align*}
z := \min\bigg(
\,\,\,\,\,{\vabs_i(X_i)-\vabs_i(Y_i) \over \vabs_i(H)}\,\,\,\,\,
,
\,\,\,\,\,{\vabs_j(Y_j)-\vabs_j(X_j) \over \vabs_j(H)}\,\,\,\,\,
,
\,\,\,\,\,1\,\,\,\,\,
\bigg)
\end{align*}
A theorem of \citet{Stromquist1985Sets} implies that there exists a subset $H^z\subseteq H$ such that:
\begin{align*}
\vabs_i(H^z) = z\cdot \vabs_i(H) && \text{and} && \vabs_j(H^z) = z\cdot \vabs_j(H)
\end{align*}
By definition of $z$, $H^z$ is sufficiently small such that:
\begin{align}
\vabs_i(Y_i \cup H^z)
=
\vabs_i(Y_i) + z\cdot \vabs_i(H)
&\le \vabs_i(X_i) \label{eq:Hsmall1}
\\
\vabs_j(Y_j\setminus H^z)
=
\vabs_j(Y_j) - z\cdot \vabs_j(H)
&\ge\vabs_j(X_j) \label{eq:Hsmall2}
\end{align}
both hold. That is, $i$ does not become lucky by getting $H^z$ and $j$ does not become unlucky by losing $H^z$ (though one of them may become indifferent).

The concavity of $w$, together with the inequalities (\ref{eq:Hsmall1}) and (\ref{eq:Hsmall2}), imply the following two inequalities:
\begin{align}
w(\vabs_i(Y_i\cup H^z))-w(\vabs_i(Y_i)) &\ge   w(\vabs_i(X_i))-w(\vabs_i(X_i\setminus H^z)); \label{eq:wcon3}
\\
w(\vabs_j(Y_j))-w(\vabs_j(Y_j\setminus H^z))
&\le
w(\vabs_j(X_j\cup H^z))-w(\vabs_j(X_j)).
\label{eq:wcon2}
\end{align}
By the optimality of $X$:
\begin{align}
&&
	w(\vabs_i(X_i))+w(\vabs_j(X_j)) &\ge w(\vabs_j(X_j\cup H^z))+w(\vabs_i(X_i\setminus H^z)) \notag
\\
\implies &&
	w(\vabs_i(X_i))-w(\vabs_i(X_i\setminus H^z))  &\ge w(\vabs_j(X_j\cup H^z))-w(\vabs_j(X_j)) \label{eq:wcon1}
\end{align}
Combining \ref{eq:wcon1}, \ref{eq:wcon2} and \ref{eq:wcon3} together yields
\begin{align*}
&&
	w(\vabs_i(Y_i\cup H^z))-w(\vabs_i(Y_i)) &\ge  w(\vabs_j(Y_j))-w(\vabs_j(Y_j\setminus H^z))
	\\
\implies &&
	w(\vabs_i(Y_i\cup H^z))+w(\vabs_j(Y_j\setminus H^z)) &\ge w(\vabs_j(Y_j))+w(\vabs_i(Y_i))
\end{align*}
So if we modify division $Y$ by transferring $H^z$ from $j$ back to $i$, the welfare weakly increases. Moreover, after the transfer, the pair $(i,j)$ is no longer a bad pair (either one of the agents becomes indifferent, or the conceded slice becomes empty). Moreover, no new bad pairs are created by the transfer, since no agents became lucky/unlucky. Therefore, we can remove the bad pairs one by one, until we get a new division $Y'$, which has at least the same welfare of $Y$ but no bad pairs. This implies that the absolute $w$-maximizer is upwards-RM.

For downwards-RM the proof is similar: here, $Y$ is the old division (of the larger cake) and $X$ is the new division (of the smaller cake). Define a bad pair as a pair $(i,j)$ such that:
\begin{itemize}
	\item Agent $i$ is \emph{lucky}, i.e, $\vabs_i(X_i)>\vabs_i(Y_i)$;
	\item Agent $j$ is \emph{unlucky}, i.e, $\vabs_j(X_j)<\vabs_j(Y_j)$;
	\item Agent $i$ has taken a positive slice from agent $i$, i.e, $X_i\cap Y_j$ is positive.
\end{itemize}
If there are no lucky agents, then we are done. Moreover, if there are no bad pairs, then the original division $Y$ could be modified by giving all lucky and indifferent agents the share they are going to receive in $X$ (since the cake of division $Y$ contains the cake of division $X$). This makes the lucky agents better-off and does not harm the unlucky agents. Hence, if $Y$ is welfare-maximizing and there are no bad pairs, then there must be no lucky agents at all, and we are done. From here, the proof that there exists a division $X'$ where there are no bad pairs follows the above proof word by word.
\end{proof}

\begin{remark*}
A relation between concavity and resource-monotonicity in the context of homogeneous goods is mentioned in Exercise 2.16 of \citet{Moulin2004Fair}.
\end{remark*}

\subsection{Population-monotonicity}
In this section we prove that, for both absolute and relative $w$-maximizers, weak concavity of $w$ is sufficient for PM.

\begin{theorem}\label{thm:pm}
If $w$ is concave then \wmabs{w} and \wmrel{w} are both population-monotonic.
\end{theorem}
\begin{proof}
When an agent joins or leaves, the total cake value does not change, so the proof for \wmabs{w} and \wmrel{w} is the same.

\ifdefined\ProofPM
We assume that $w$ is concave and prove that in this case the \wmabs{w} rule is PM.

We start with downwards-PM. Let $X$ be an allocation that maximizes the welfare function $\wabs_w$ among agents in $N$,
and $Y$ an allocation that maximizes $\wabs_w$ among agents in $N'\subsetneq N$.
We use the following definitions:
\begin{itemize}
	\item Agent $i\in N'$ is \emph{unlucky} if $\vabs_i(Y_i)<\vabs_i(X_i)$;
	\item Agent $j\in N'$ is \emph{lucky} if $\vabs_j(Y_j)>\vabs_j(X_j)$;
	\item Agent $k\in N'$ is \emph{indifferent} if $\vabs_k(Y_k)=\vabs_k(X_k)$.
	\item A pair of agents $(i,j)\in N'\times N'$ is a \emph{bad pair} if $i$ is unlucky, $j$ is lucky, and $i$ has conceded a positive slice to agent $j$, i.e, $X_i \cap Y_j$ is positive.
\end{itemize}
If there are no unlucky agents in $Y$, then we are done --- monotonicity is satisfied for all agents in $N'$.

Moreover, if there are no bad pairs in $Y$, then by giving all unlucky and indifferent agents in $N'$ the original share that they had in $X$, the unlucky agents will be better-off and the other agents will not be harmed. Hence, if $Y$ is welfare-maximizing and has no bad pairs, then $Y$ has no unlucky agents at all, and we are done.

Therefore, to prove downwards-PM, it is sufficient to prove that there exists a division $Y'$ among $n-1$ agents with $\wabs_w(Y') \geq  \wabs_w(Y)$ where there are no bad pairs. Let $(i,j)$ be a bad pair and let $H:=X_i \cap Y_j$. By definition $H$ is positive, so by PO of $X$ and $Y$ it has a positive value for both $i$ and $j$. Let $z\in[0,1]$ be a number defined as:
\begin{align*}
z := \min\bigg(
\,\,\,\,\,{\vabs_i(X_i)-\vabs_i(Y_i) \over \vabs_i(H)}\,\,\,\,\,
,
\,\,\,\,\,{\vabs_j(Y_j)-\vabs_j(X_j) \over \vabs_j(H)}\,\,\,\,\,
,
\,\,\,\,\,1\,\,\,\,\,
\bigg)
\end{align*}
A theorem of \citet{Stromquist1985Sets} implies that there exists a subset $H^z\subseteq H$ such that:
\begin{align*}
\vabs_i(H^z) = z\cdot \vabs_i(H) && \text{and} && \vabs_j(H^z) = z\cdot \vabs_j(H)
\end{align*}
By definition of $z$, $H^z$ is sufficiently small such that:
\begin{align}
\vabs_i(Y_i \cup H^z)
=
\vabs_i(Y_i) + z\cdot \vabs_i(H)
&\le \vabs_i(X_i) \label{eq:Hsmall1}
\\
\vabs_j(Y_j\setminus H^z)
=
\vabs_j(Y_j) - z\cdot \vabs_j(H)
&\ge\vabs_j(X_j) \label{eq:Hsmall2}
\end{align}
both hold. That is, $i$ does not become lucky by getting $H^z$ and $j$ does not become unlucky by losing $H^z$ (though one of them may become indifferent).

The concavity of $w$, together with the inequalities (\ref{eq:Hsmall1}) and (\ref{eq:Hsmall2}), imply the following two inequalities:
\begin{align}
w(\vabs_i(Y_i\cup H^z))-w(\vabs_i(Y_i)) &\ge   w(\vabs_i(X_i))-w(\vabs_i(X_i\setminus H^z)); \label{eq:wcon3}
\\
w(\vabs_j(Y_j))-w(\vabs_j(Y_j\setminus H^z))
&\le
w(\vabs_j(X_j\cup H^z))-w(\vabs_j(X_j)).
\label{eq:wcon2}
\end{align}
By the optimality of $X$:
\begin{align}
&&
	w(\vabs_i(X_i))+w(\vabs_j(X_j)) &\ge w(\vabs_j(X_j\cup H^z))+w(\vabs_i(X_i\setminus H^z)) \notag
\\
\implies &&
	w(\vabs_i(X_i))-w(\vabs_i(X_i\setminus H^z))  &\ge w(\vabs_j(X_j\cup H^z))-w(\vabs_j(X_j)) \label{eq:wcon1}
\end{align}
Combining \ref{eq:wcon1}, \ref{eq:wcon2} and \ref{eq:wcon3} together yields
\begin{align*}
&&
	w(\vabs_i(Y_i\cup H^z))-w(\vabs_i(Y_i)) &\ge  w(\vabs_j(Y_j))-w(\vabs_j(Y_j\setminus H^z))
	\\
\implies &&
	w(\vabs_i(Y_i\cup H^z))+w(\vabs_j(Y_j\setminus H^z)) &\ge w(\vabs_j(Y_j))+w(\vabs_i(Y_i))
\end{align*}
So if we modify division $Y$ by transferring $H^z$ from $j$ back to $i$, the welfare weakly increases. Moreover, after the transfer, the pair $(i,j)$ is no longer a bad pair (either one of the agents becomes indifferent, or the conceded slice becomes empty). Moreover, no new bad pairs are created by the transfer, since no agents became lucky/unlucky. Therefore, we can remove the bad pairs one by one, until we get a new division $Y'$, which has at least the same welfare of $Y$ but no bad pairs. This implies that the absolute $w$-maximizer is downwards-PM.

For upwards-PM the proof is similar.
Here $Y$ is the old division (among agents in $N'$) and $X$ is the new division (among agents in $N\supsetneq N'$).
Define a bad pair as a pair $(i,j)\in N'\times N'$ such that:
\begin{itemize}
	\item Agent $i$ is \emph{lucky}, i.e, $\vabs_i(X_i)>\vabs_i(Y_i)$;
	\item Agent $j$ is \emph{unlucky}, i.e, $\vabs_j(X_j)<\vabs_j(Y_j)$;
	\item Agent $i$ has taken a positive slice from $j$, i.e, $X_i\cap Y_j$ is a positive slice.
\end{itemize}
If there are no lucky agents in $X$, then we are done. Moreover, if there are no bad pairs, then the original division $Y$ could be modified by giving all lucky and indifferent agents in $N'$ the share they are going to receive in $X$. This makes the lucky agents better-off and does not harm the unlucky agents. Hence, if $Y$ is welfare-maximizing and there are no bad pairs, then there are no lucky agents at all, and we are done.
From here, the proof that there exists a division $X'$ where there are no bad pairs follows the above proof word by word.

\else
The proof of downwards-PM is almost identical to the proof of upwards-RM in the If Part of Theorem \ref{thm:abs-wconcave-rm}.
The only difference is that here $X$ is the original division among the agents in $N$,
and $Y$ is the new division among the agents in $N'\subsetneq N$.
Similarly, the proof of upwards-PM is almost identical to the proof of downwards-RM.
\fi
\end{proof}
Note that ESV cannot be used
to omit one of the upwards-PM or downwards-PM arguments,
since ESV is only guaranteed for strictly-concave functions (Theorem \ref{thm:esv}) while Theorem \ref{thm:pm} assumes merely concavity.
We do not know if concavity of $w$ is necessary for PM.

\subsection{Proportionality}
In this section we prove that, for relative $w$-maximizers, hyper-concavity (Definition \ref{def:hyper-concave}) is sufficient for proportionality. The main theorem is:
\begin{theorem}
	\label{lem:rel-prop}
	If the function $w$ is hyper-concave, then the \wmrel{w} rule is proportional.
\end{theorem}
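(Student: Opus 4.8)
The plan is to show that with a hyper-concave $w$, every agent in a \wmrel{w}-allocation $X$ gets at least $1/n$ of the cake in their own relative valuation. I would argue by contradiction: suppose some agent $i$ has $\vrel_i(X_i) < 1/n$. Since the relative values $\vrel_i$ sum to $1$ over the pieces (the $X_k$ partition $C$ and $\vrel_i(C)=1$), there must exist an agent $j$ with $\vrel_i(X_j) > \vrel_i(X_i)$ — indeed, if $\vrel_i(X_i) < 1/n$, the average of the $\vrel_i(X_k)$ is $1/n$, so some piece $X_j$ is worth strictly more than $X_i$ to agent $i$. Hence agent $i$ envies agent $j$.

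Now invoke Lemma~\ref{lem:envy}: since $w$ is hyper-concave and $i$ envies $j$ in a \wmrel{w}-allocation, agent $i$ is relatively-richer than $j$, i.e.\ $\vrel_i(X_i) > \vrel_j(X_j)$. So $\vrel_j(X_j) < \vrel_i(X_i) < 1/n$, meaning agent $j$ is \emph{also} below the proportional threshold. I would like to iterate this: agent $j$ now also envies somebody (by the same averaging argument applied to $j$'s valuation, since $\vrel_j(X_j) < 1/n$), that somebody is even relatively-poorer, and so on. The point is to build a strictly decreasing chain of relative values $\vrel_i(X_i) > \vrel_j(X_j) > \dots$ among agents all lying strictly below $1/n$; since there are only $n$ agents, a value must repeat, contradicting strict monotonicity of the chain.

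A slightly cleaner way to package this: let $k^* = \arg\min_k \vrel_k(X_k)$ be a relatively-poorest agent, and suppose for contradiction $\vrel_{k^*}(X_{k^*}) < 1/n$. By the averaging argument $k^*$ envies some agent $j$, and then Lemma~\ref{lem:envy} gives $\vrel_{k^*}(X_{k^*}) > \vrel_j(X_j)$ — but this contradicts $k^*$ being relatively-poorest. Hence $\vrel_{k^*}(X_{k^*}) \geq 1/n$, and since $k^*$ is the poorest, $\vrel_k(X_k) \geq 1/n$ for all $k$, which is exactly proportionality.

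**The main obstacle** I anticipate is making the averaging step fully rigorous in the cake-cutting setting: one needs that $\sum_k \vrel_i(X_k) = \vrel_i(C) = 1$ (from countable additivity and the partition property of a division), so that $\vrel_i(X_i) < 1/n$ forces $\sum_{k \neq i} \vrel_i(X_k) > (n-1)/n$ and hence $\max_{k\neq i}\vrel_i(X_k) > 1/n > \vrel_i(X_i)$; this is routine but must be stated. The only other subtlety is that Lemma~\ref{lem:envy} requires the envied agent to be a genuinely distinct agent, which is automatic here since $i$ cannot envy itself. No heavy machinery beyond Lemma~\ref{lem:envy} is needed, so I expect the proof to be short.
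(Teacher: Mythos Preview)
Your proposal is correct and follows essentially the same approach as the paper: both use pigeonhole to find an envied agent and then invoke Lemma~\ref{lem:envy} to conclude that the envied agent is even poorer. The only cosmetic difference is in extracting the final contradiction---the paper observes that the set of ``unhappy'' agents must contain an envy cycle (which contradicts Pareto-optimality of $X$), whereas your cleaner variant looks at the relatively-poorest agent and derives an immediate contradiction from the strict inequality in Lemma~\ref{lem:envy}; both endings are equally valid and equally short.
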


To prove this theorem we use several lemmata about the properties of welfare-maximizing allocations. Although we need and prove them only for relative-$w$-maximizers, we note that they hold for absolute-$w$-maximizers too.
\begin{lemma}\label{lem:infi}
~\\
(a) Let $X$ be an allocation that maximizes the relative social-welfare function $\wrel_w$.
For every two agents $i,j$ and for every slice $H\subseteq X_j$:
\begin{align*}
w'(\vrel_j(X_j))\cdot \vrel_j(H)
\geq
w'(\vrel_i(X_i))\cdot \vrel_i(H)
\end{align*}
(so $H$ is given to an agent $j$ for whom the product $w'(\vrel_j(X_j)) \cdot \vrel_j(H)$ is maximal).
\\
(b) Let $X$ be an allocation that maximizes the absolute social-welfare function $\wabs_w$.
For every two agents $i,j$ and for every slice $H\subseteq X_j$:
\begin{align*}
w'(\vabs_j(X_j))\cdot \vabs_j(H)
\geq
w'(\vabs_i(X_i))\cdot \vabs_i(H)
\end{align*}
\end{lemma}
\begin{proof}
We prove part (a); the proof of (b) is entirely analogous.

We again use the theorem of \citet{Stromquist1985Sets}. For every $H$ and $z\in[0,1]$, there exists a subset $H^z\subseteq H$ such that:
\begin{align*}
\vrel_i(H^z) = z\cdot \vrel_i(H) && \text{and} && \vrel_j(H^z) = z\cdot \vrel_j(H)
\end{align*}
Let $Y(z)$ be an allocation derived from $X$ by taking $H^z$ from agent $j$ and giving it to agent $i$. The difference of welfare between the two allocations, as a function of $z$, is:
\begin{align*}
	F(z)&=\wrel_w(Y(z))-\wrel_w(X) =
	\\
	&= w\Big(\vrel_i(X_i) + \vrel_i(H^z)\Big)
	+ w\Big(\vrel_j(X_j) - \vrel_j(H^z)\Big) -w\Big(\vrel_i(X_i)\Big)-w\Big(\vrel_j(X_j)\Big)
	\\
	&= w\Big(\vrel_i(X_i) + z\cdot \vrel_i(H)\Big)
	+ w\Big(\vrel_j(X_j) - z\cdot \vrel_j(H)\Big) -w\Big(\vrel_i(X_i)\Big)-w\Big(\vrel_j(X_j)\Big)
\end{align*}
	Take the derivative as a function of $z$:
\begin{align*}
	F'(z) &= w'\Big(\vrel_i(X_i) + z\cdot \vrel_i(H)\Big)\cdot \vrel_i(H)
	- w'\Big(\vrel_j(X_j) - z\cdot \vrel_j(H)\Big)\cdot \vrel_j(H)
\end{align*}
	When $z=0$, the alternative allocation $Y(z)$ is identical to the original allocation $X$, and we know that this allocation maximizes $\wrel_w$, so 0 is a maximum point of $F$.  Therefore, $F'(0)\leq 0$:
\begin{align*}
&&
0\geq F'(0) = w'\Big(v_i(X_i)\Big)\cdot v_i(H) - w'\Big(v_j(X_j)\Big)\cdot v_j(H)
\\
\implies &&
w'\Big(v_j(X_j)\Big)\cdot v_j(H)
\geq
w'\Big(v_i(X_i)\Big)\cdot v_i(H).
\end{align*}
\end{proof}

We will now prove an interesting property of welfare-maximization with hyper-concave functions:
a poor agent never envies a richer agent. Formally, given an allocation $X$, we say that:
\begin{itemize}
	\item An agent $i$ \emph{envies} agent $j$, if $\vrel_i(X_i)<\vrel_i(X_j)$.
	Note that values of the same agent are compared.
	\item An agent $i$ is \emph{relatively/absolutely richer} than agent $j$, if $\vrel_i(X_i)>\vrel_j(X_j)$  / $\vabs_i(X_i)>\vabs_j(X_j)$.
	Note that values of different agents are compared.
	\item An agent $i$ is \emph{relatively/absolutely poorer} than agent $j$, if $\vrel_i(X_i)<\vrel_j(X_j)$  / $\vabs_i(X_i)<\vabs_j(X_j)$.
\end{itemize}

\begin{lemma}\label{lem:envy}
Let $w$ be a hyper-concave function ($x w'(x)$ is weakly-decreasing).

(a) If, in an allocation selected by the \wmrel{w} rule, an agent $i$ envies an agent $j$, then agent $i$ is relatively-richer than agent $j$.

(b) If, in an allocation selected by the \wmabs{w} rule, an agent $i$ envies an agent $j$, then agent $i$ is absolutely-richer than agent $j$.
\end{lemma}
\begin{proof}
We prove part (a); the proof of (b) is entirely analogous.
By Lemma \ref{lem:infi} (taking $H=X_j$):
\begin{align*}
	w'(\vrel_j(X_j))\cdot \vrel_j(X_j)
	&\geq
	w'(\vrel_i(X_i))\cdot \vrel_i(X_j)
\end{align*}

Combining the latter inequality with the assumption that $i$ envies $j$ ($v_i(X_j) > v_i(X_i)$) gives:
\begin{align*}
	w'(\vrel_j(X_j))\cdot \vrel_j(X_j)
	&>
	w'(\vrel_i(X_i))\cdot \vrel_i(X_i)
\end{align*}
Since $x w'(x)$ is weakly decreasing, this implies:
\begin{align*}
	\vrel_j(X_j) < \vrel_i(X_i)
\end{align*}
so $i$ is relatively-richer than $j$.
\end{proof}

Now we can prove our main theorem.
\begin{proof}[Proof of Theorem \ref{lem:rel-prop}]
Let $X$ be an allocation selected by the \wmrel{w} rule.
We prove that $X$ is proportional.

Call an agent $i$ ``unhappy'' if $\vrel_i(X_i)<1/n$.
Suppose by contradiction that $X$ is not proportional. Then there is at least one unhappy agent, say $i$. By the pigeonhole principle, $i$ necessarily envies some other agent, say $j$.
	By Lemma \ref{lem:envy}, agent $j$ must be relatively-poorer than $i$, so $j$ is also unhappy.
		Now, consider the set of all unhappy agents. Since each agent in the set envies another agent in the set, there must be a cycle of agents envying each other. But this contradicts the optimality of $X$. Hence, the set of unhappy agents must be empty.
\end{proof}

\begin{remark}
We do not know if hyper-concavity is necessary for proportionality. We do know that
strict concavity is not sufficient for proportionality.
For example, for some constant $p\in(0,1)$, Let $w_p(x) = x^p$. Note that $w_p$ is strictly concave but not hyper-concave. Consider the following cake:
\begin{center}
\begin{tabular}{lccccl}
&         &
\\
\hline
$\vabs_A$ & \c{1} &  \c{0}
\\
\hline
$\vabs_{B}$ &  \c{2/3} &  \c{1/3}
\\
\hline&
\end{tabular}
\end{center}
Let $x\in[0,1]$ be the value given to Alice. The value remaining for Bob is $1-2x/3$. By proportionality, Bob must receive at least $1/2$, so a proportional rule must select $x\leq 3/4$. The total relative (and absolute) welfare, as a function of $x$, is given by:
\begin{align*}
&&
F(x) &= w_p(x) + w_p\bigg(1-{2x\over 3}\bigg)
\\
\implies &&
F'(x) &= w_p'(x) - {2\over 3} w_p'\bigg(1-{2x\over 3}\bigg)
\\
\implies &&
{3\over 4} F'\bigg({3\over 4}\bigg) &= {3\over 4}w_p'\bigg({3\over 4}\bigg) - {1\over 2} w_p'\bigg({1\over 2}\bigg)
\end{align*}
The latter expression is positive since
$x w_p'(x)$ is increasing. Hence $F'(3/4)>0$. Since $w_p'$ is decreasing, $F'$ is also decreasing, so $F'(x)>0$ for all $x\leq 3/4$. Hence $F$ cannot have a maximum point at $x\leq 3/4$, so  the \wmrel{w} rule is not proportional. \qed
\end{remark}

\subsection{The Nash-optimal rule}
Let us collect our findings so far.
\begin{itemize}
\item The \wmabs{w} is RM and PM when $w$ is concave, and ESV when $w$ is strictly-concave.
\item The \wmrel{w} is PM when $w$ is concave, ESV when $w$ is strictly-concave, and PROP when $w$ is hyper-concave.
\end{itemize}
Therefore, to get all desirable properties simultaneously, we have to find a rule that is simultaneously maximizing the absolute and relative social welfare with the same hyper-concave function $w$.

Indeed, such a rule exists. When $w$ is a logarithmic function, \wmabs{w} and \wmrel{w} are both equivalent to the \emph{Nash-optimal rule} --- the rule that maximizes the product of utilities \citep{Nash1950Bargaining}:
\begin{align*}
X_{Nash} := \arg\max_{X} \prod_{i=1}^{n} \vabs_i(X_i)
 := \arg\max_{X} \prod_{i=1}^{n} \vrel_i(X_i)
\end{align*}

\begin{corollary}
\label{cor:nash}
The Nash-optimal rule is ESV, RM, PM and PROP.
\end{corollary}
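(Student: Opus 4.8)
The plan is to obtain Corollary \ref{cor:nash} by simply checking that a logarithmic welfare function $w(x)=\log x$ satisfies all the hypotheses collected in the three bullet points immediately above the corollary, and then invoking the previously-established results. First I would verify the arithmetic identity that underlies the whole section: for the logarithmic $w$, maximizing $\wabs_w(X)=\sum_i \log \vabs_i(X_i)$ is the same as maximizing $\prod_i \vabs_i(X_i)$ (monotonicity of $\exp$), and maximizing $\wrel_w(X)=\sum_i \log \vrel_i(X_i)=\sum_i \bigl(\log\vabs_i(X_i)-\log\vabs_i(C)\bigr)$ differs from $\wabs_w$ only by the additive constant $-\sum_i\log\vabs_i(C)$, which does not depend on $X$; hence the \wmabs{\log} rule, the \wmrel{\log} rule, and the Nash-optimal rule all select exactly the same set of allocations. (Here I would also note in passing that $\vabs_i(C)>0$ for every agent, which is guaranteed by the model assumptions, so the logarithms are well-defined on the relevant range; Pareto-optimality ensures no agent receives value $0$ when the product is being maximized, since the total cake has positive value to at least one agent and one can always redistribute to make every coordinate positive, so the $\arg\max$ is attained in the interior.)

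Next I would check hyper-concavity of $w(x)=\log x$. We have $x\,w'(x)=x\cdot(1/x)=1$, which is a (weakly) decreasing — indeed constant — function of $x$, so $w$ is hyper-concave by the definition in the excerpt; and since $w$ is strictly increasing, hyper-concavity implies strict concavity, which in turn implies concavity. Having established that $\log$ is simultaneously concave, strictly concave, and hyper-concave, the four properties follow immediately: ESV follows from Lemma \ref{lem:esv} (strict concavity); RM follows from Theorem \ref{thm:abs-wconcave-rm} (concavity of $w$ for the absolute maximizer), using the fact that the Nash rule is the \wmabs{\log} rule; PM follows from Lemma \ref{lem:pm} (concavity); and PROP follows from Theorem \ref{lem:rel-prop} (hyper-concavity), using the fact that the Nash rule is the \wmrel{\log} rule.

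There is essentially no hard step here — the corollary is a bookkeeping assembly of earlier results — but the one point that deserves care is the equivalence of the three rules, since each earlier result was proved for a specific representation (absolute vs.\ relative) of the maximizer, and the corollary asserts all four properties of the single Nash rule. The claim "$\wmabs{w}$ and $\wmrel{w}$ coincide with the Nash rule when $w=\log$" is what licenses us to pull RM from the absolute side and PROP from the relative side and attribute both to the same rule; so I would state that equivalence explicitly and justify it via the additive-constant argument above before quoting Theorem \ref{thm:abs-wconcave-rm} and Theorem \ref{lem:rel-prop}. Everything else is a one-line citation of the relevant lemma or theorem.
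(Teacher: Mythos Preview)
Your proposal is correct and follows exactly the route the paper takes: verify that for $w=\log$ the absolute and relative maximizers coincide with the Nash rule (via the additive-constant argument), check that $\log$ is hyper-concave (hence strictly concave, hence concave), and then invoke Lemma~\ref{lem:esv}, Theorem~\ref{thm:abs-wconcave-rm}, Lemma~\ref{lem:pm}, and Theorem~\ref{lem:rel-prop} in turn. The paper does not even spell out a proof beyond the bullet list and the observation that the two maximizers coincide for logarithmic $w$; your write-up is if anything more careful than the original.
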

Is Nash-optimal the only welfare-maximizing rule with these four properties? Below we prove that it is indeed unique.
\begin{theorem}
\label{thm:uniqueness}
(a) If the \wmabs{w} is proportional, then it must be the Nash-optimal rule.

(b) If the \wmrel{w} rule is ESV, proportional and resource monotonic, then it must be the Nash-optimal rule.
\end{theorem}

We start by proving uniqueness in the family of absolute welfare maximizers.
\begin{proof}[Proof of Theorem \ref{thm:uniqueness}(a)]
	Consider the following one-slice cake:
	\begin{center}
		\begin{tabular}{lc}
			\hline
			$\vabs_A$ & \c{$2a$}
			\\
			\hline
			$\vabs_B$ &  \c{$2b$}
			\\
			\hline&
		\end{tabular}
	\end{center}
A proportional allocation must give each agent exactly half the cake, so that $\vabs_A(X_A)=\vabs_A(X_B)=a$ and $\vabs_B(X_B)=\vabs_B(X_A)=b$.

Using Lemma \ref{lem:infi} with $j=$Alice and $H=X_A$ and $i=$Bob gives:
\begin{align*}
w'(a)\cdot a \geq w'(b)\cdot b
\end{align*}
Using Lemma \ref{lem:infi} with $j=$Bob and $H=X_B$ and $i=$Alice gives:
\begin{align*}
w'(b)\cdot b \geq w'(a)\cdot a
\end{align*}
Since $a,b$ are general, this implies that:
\begin{align*}
\forall a,b: && a w'(a) &= b w'(b)
\end{align*}
This means that the function $a w'(a)$ is a constant function (independent of $a$).
Hence, $w$ must be a logarithmic function ($w(\cdot)=c\ln{(\cdot)}+d$, for some constants $c>0$ and $d$) so \wmabs{w} is the Nash-optimal rule.
\end{proof}

We now turn to proving the uniqueness of the Nash-optimal rule in the family of relative
 welfare-maximizers.

\begin{proof}[Proof of Theorem \ref{thm:uniqueness}(b)]
By Theorem \ref{thm:esv}(c), if \wmrel{w} is PROP and ESV then $w$ is strictly concave in $[0,1]$. Hence our theorem follows from the next lemma.
\end{proof}

\begin{lemma}
	\label{lem:rel-not-rm}
	If $w$ is strictly concave in $[0,1]$ and the \wmrel{w} rule is proportional and resource-monotonic, then it is the Nash-optimal rule.
\end{lemma}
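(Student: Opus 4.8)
The plan is to deduce from the two hypotheses that the function $\phi(x):=x\,w'(x)$ is constant; once that is established, $w'(x)=c/x$ on $(0,1)$ with $c>0$ (the sign because $w$ is strictly increasing), hence $w(x)=c\ln x+d$ there, and since \wmrel{w} only ever evaluates $w$ at relative values --- which lie in $[0,1]$ --- the rule coincides with the one generated by $w=\ln$, i.e.\ with the Nash-optimal rule. To begin, I would note that by Lemma~\ref{lem:esv} strict concavity makes \wmrel{w} essentially single-valued, so on every cake the prescribed utility profile is essentially unique; this is what turns the monotonicity axiom into a concrete comparative-statics statement about the maximizer, for which I can freely use the first-order characterization in Lemma~\ref{lem:infi}.

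The heart of the proof is to show that resource-monotonicity forces $\phi$ to be weakly increasing on $(0,1)$. I would build base cakes with $n\ge3$ agents in which agents $3,\dots,n$ value only a single ``contested'' slice (so each such agent is hurt exactly when she loses part of it), agent~$2$ values that slice and in addition holds a private slice of a freely chosen size, and agent~$1$ holds a private slice; by tuning the private-slice size and $n$, the relative value that agent~$2$ receives in the unique \wmrel{w}-division of the base cake $C$ can be made to equal any prescribed $r\in(0,1)$. I then enlarge $C$ to $C'$ by appending a slice $E$ that agent~$1$ values enormously and agent~$2$ values by a tiny amount $\varepsilon>0$ (and agents $3,\dots,n$ not at all). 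Pareto-optimality hands all of $E$ to agent~$1$, so she is unharmed; but agent~$2$'s total value strictly increases, so the \emph{relative} value of any piece agent~$2$ receives is deflated by the factor $\vabs_2(C)/\vabs_2(C')<1$. Running this normalization change through the first-order conditions, agent~$2$ now looks ``relatively poorer'', and the maximizer transfers part of the contested slice from agents $3,\dots,n$ to agent~$2$; a first-order expansion shows the transfer goes in the direction that hurts agents $3,\dots,n$ exactly when $\phi$ is locally decreasing at $r$. Since resource-monotonicity forbids hurting them, $\phi'(r)\ge0$; letting $r$ range over $(0,1)$ gives $\phi'\ge0$ throughout.

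For the complementary bound, proportionality forces $\phi$ to be weakly decreasing, and here the cakes are exactly those in the remark after Theorem~\ref{lem:rel-prop}: for each $k\ge2$, the cake with agents $A,B_1,\dots,B_{k-1}$ where $A$ values only slice~$1$ and each $B_i$ has value densities $(b,1-b)$ with $b>1/k$. Writing the welfare of the optimal division as a concave function $F$ of the value $x$ going to $A$, proportionality requires $x\le1/(bk)$, whence $F'(1/(bk))\le0$, which after multiplying by $1/(bk)$ is $\phi(1/(bk))\le\phi(1/k)$; letting $b$ range over $(1/k,1)$ gives $\phi(t)\le\phi(1/k)$ for all $t\in(1/k,1)$, and letting $k$ vary gives $\phi(t)\le L$ on $(0,1)$, where $L:=\lim_k\phi(1/k)$ (this limit exists because the same inequalities make $k\mapsto\phi(1/k)$ nondecreasing). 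Since $\phi$ is also weakly increasing, $L=\inf_{s>0}\phi(s)\le\phi(t)$, so $\phi\equiv L$ on $(0,1)$, as wanted. I expect the main obstacle to be the comparative statics of the second paragraph: verifying that the skewed enlargement $E$ is really handed entirely to agent~$1$ by \wmrel{w}, and that the normalization-induced shift of the contested slice has its sign controlled by $\phi'$ rather than by some coarser property of $w$, together with arranging the base cakes so that agent~$2$'s relative value $r$ sweeps all of $(0,1)$. Everything else is routine.
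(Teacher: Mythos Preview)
Your strategy and the paper's both aim to show that $\phi(x):=x\,w'(x)$ is constant, but the routes are quite different. The paper does not split the task into ``$\phi$ weakly increasing'' (from RM) plus ``$\phi$ weakly decreasing'' (from PROP). Instead, for each pair $0<s<t<1$ it builds a single cake in which Alice and Bob are \emph{symmetric} before the enlargement, so strict concavity alone pins both of their relative values at $t$. The enlargement is valued by Bob and by a large block of complementary agents but not by Alice; for large $k$ the complementary agents absorb the entire enlargement (via Lemma~\ref{lem:infi} and strict concavity), so Bob gains nothing from it --- the only effect is that Bob's total cake value inflates from $1$ to $t/s$. Resource-monotonicity then forces both Alice and Bob to retain their pre-enlargement absolute value $t$; applying Lemma~\ref{lem:infi} in both directions on the disputed slice yields $t\,w'(t)=s\,w'(s)$ in one stroke. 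Two such cakes (one for $t\ge 1/2$, one for $t\le 1/2$) cover all of $(0,1)$. The payoff of this design is that the target values $s,t$ are parameters of the \emph{cake}, not of the unknown rule: the symmetry forces the values to be exactly $t$ (and then $s$) regardless of which strictly-concave $w$ we face, so there is no need to argue that ``$r$ can be swept over $(0,1)$'' as you do.

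Your argument can be made to work, but two points deserve attention. First, ``Pareto-optimality hands all of $E$ to agent~$1$'' is not correct: since agent~$2$ values $E$ positively, Pareto-optimality alone does not decide who gets it --- you need the first-order condition of Lemma~\ref{lem:infi} together with the size asymmetry to push $E$ entirely to agent~$1$; you acknowledge this later as an obstacle, but the earlier sentence is misleading. Second, once your construction is in place it actually delivers more than you extract: if $\phi$ were strictly increasing near $r$, the same normalization shift would move contested slice \emph{away} from agent~$2$, and since agent~$2$ receives nothing from $E$ her absolute value would drop --- also an RM violation. So your RM gadget by itself yields $\phi'(r)=0$ for every attainable $r$, and the PROP detour (which, incidentally, is fine as written) is unnecessary. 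The paper's symmetric construction sidesteps both issues: the values are pinned down independently of $w$, and the two applications of Lemma~\ref{lem:infi} give the equality directly rather than two inequalities to be combined.
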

\begin{proof}[Proof sketch] We consider a cake that has to be divided among $k+2$ agents --- Alice, Bob and the so called \emph{complementary agents} $C_1,\dots,C_k$. We assume that $k$ is large. One slice of the cake --- the 'disputed slice' --- is wanted only by Alice and Bob, and the main task of the \wmrel{w} rule is to decide how this slice is divided between them.
	
In the initial situation, the cake is small, and both Alice and Bob value the entire cake as 1. Their value measures are similar, so a strictly concave rule must give each of them exactly 1/2 of the disputed slice. Then the cake grows. The enlargement is valuable only for Bob and for the complementary agents, but not for Alice. When $k$ is sufficiently large, the complementary agents take all the enlargement, so Bob gains no value from it; the only effect of the enlargement is that Bob's value for the entire cake is larger, so Bob's relative value for the disputed slice is smaller. This  breaks the symmetry and causes the \wmrel{w} rule to give either Alice or Bob a smaller share of the disputed slice --- in contradiction to resource-monotonicity. The only case in which this does not happen is when $w$ is a logarithmic function, which implies that \wmrel{w} is the Nash-optimal rule.
	
All the above has to be done twice: once to prove that $w(x)$ is logarithmic when $x\in [0,1/2]$, and then to prove that it is logarithmic when $x\in [1/2,1]$.
\end{proof}
\begin{proof}
	In the following cakes, the valuations are parameterized by $s$ and $t$.
	
\textbf{Cake 1} has the following valuations, for $0<s<t<1$ and $t\geq 1/2$:
	
\begin{center}
	\begin{tabular}{lcccccc}
		&    &    &       &          & $\blacktriangledown$\\
		\hline \c{$\vabs_A$} & \c{$2-2t$} & \c{$2t-1$} & \c{0}  & \c{0} & \c{\en 0}  \\
		\hline \c{$\vabs_B$} & \c{$2-2t$} & \c{0} & \c{$2t-1$} & \c{0} & \c{\en  ${t / s}-1$}    \\
		\hline \c{$\vabs_{C1}$} & \c{0} & \c{0} & \c{0} &\c{1}& \c{\en  ${t / s}-1$}             \\
		\hline \c{...}           & \c{0} & \c{0}& \c{0} &\c{1}& \c{\en  ${t / s}-1$}             \\
		\hline \c{$\vabs_{Ck}$}  & \c{0} & \c{0}& \c{0} &\c{1}& \c{\en  ${t / s}-1$}             \\
		\hline
	\end{tabular}
\end{center}
	
Initially the cake contains only the four leftmost slices. Alice gets slice \#2 and Bob gets slice \#3. By strict concavity, slice \#1 is divided equally between Alice and Bob, and their value (relative and absolute alike) is $t$.

When the cake grows, the cake value increases from 1 to $t/s$ for Bob and for the complementary agents. Let $Y$ be the new allocation. The relative value of Bob is now at least $(2t-1)/(t/s)$. In contrast, the complementary agents have to share a relative value of 1, so there is at least one agent for whom: $\vrel_{Ci}(Y_{Ci})\leq 1/k$. Therefore, when $k$ is sufficiently large:
\begin{align}
\vrel_{B}(Y_{B}) &> \vrel_{Ci}(Y_{Ci})
\label{eq:w'b<w'ci}
\end{align}
We claim that, for such $k$, Bob does not receive any value from the enlargement. Indeed, suppose by contradiction that Bob receives from the enlargement a slice $H$ with positive value. Since $H \subseteq Y_{B}$, by Lemma \ref{lem:infi}:
\begin{align*}
w'(\vrel_{B}(Y_{B}))\cdot {\vrel_B(H)}
\geq
w'(\vrel_{Ci}(Y_{Ci}))\cdot {\vrel_{C_i}(H)}
\end{align*}
Since $H$ is a subset of the enlargement, $\vrel_B(H)=\vrel_{C_i}(H)$, and by assumption this value is positive, so the above implies:
\begin{align*}
w'(\vrel_{B}(Y_{B}))
\geq
w'(\vrel_{Ci}(Y_{Ci}))
\end{align*}
But this combined with (\ref{eq:w'b<w'ci}) contradicts the strict concavity of $w$.

Since --- for sufficiently large $k$ --- Bob does not receive anything from the enlargement, by resource-monotonicity he must receive at least half of slice \#1. The same is true for Alice, so both of them must receive exactly half of slice \#1. Thus each of them has an absolute value of exactly $t$, so their relative values are:
\begin{align*}
\vrel_A(Y_A)=t/1 = t && \text{and} && \vrel_B(Y_B)=t/(t/s)=s
\end{align*}
Apply Lemma \ref{lem:infi} with $j=$Alice, $i=$Bob and $H=$Alice's share of slice \#1:
\begin{align*}
w'(t)\cdot (1-t) \geq w'(s)\cdot {1-t \over t/s}
\end{align*}
Apply Lemma \ref{lem:infi} with $j=$Bob, $i=$Alice and $H=$Bob's share of slice \#1:
\begin{align*}
w'(s)\cdot {1-t \over t/s} \geq w'(t)\cdot (1-t)
\end{align*}
Combine these two inequalities to obtain $t w'(t) = s w'(s)$, for any $t\geq 1/2$. Formally,
\begin{align}
\label{eq:tgeq1/2}
\forall t\geq 1/2: \forall s \text{ with } 0<s<t<1:
t w'(t) = s w'(s)
\end{align}

	\textbf{Cake 2} has the following valuations, for $0<s<t<1$ and $t\leq 1/2$:
	
	\begin{center}
		\begin{tabular}{lccccc}
			&       &     &     & & $\blacktriangledown$\\
			\hline \c{$\vabs_A$}    & \c{$2t$} & \c{$0$} & \c{$1-2t$} &  \c{0}&  \c{\en  0}         \\
			\hline \c{$\vabs_B$}    & \c{$2t$} & \c{$0$} & \c{$1-2t$} &  \c{0}& \c{\en  ${t / s}-1$}\\
			\hline \c{$\vabs_{C_1}$} & \c{0} & \c{$2t$} &\c{$1-2t$}&  \c{0}& \c{\en  0}         \\
			\hline \c{...}          & \c{0} & \c{$2t$} &\c{$1-2t$}&  \c{0}& \c{\en  0}         \\
			\hline \c{$\vabs_{C_k}$} & \c{0} & \c{$2t$} &\c{$1-2t$}&  \c{0}& \c{\en  0}         \\
			\hline \c{$\vabs_{C_{k+1}}$} & \c{0} & \c{$0$} &\c{0}&  \c{$1$}& \c{\en  $t/s-1$}         \\
			\hline \c{...}          & \c{0} & \c{$0$} &\c{0}&  \c{$1$}& \c{\en  $t/s-1$}         \\
			\hline \c{$\vabs_{C_{2k}}$} & \c{0} & \c{$0$} &\c{0}&  \c{$1$}& \c{\en  $t/s-1$}         \\
			\hline
		\end{tabular}
	\end{center}
	Initially the cake contains only the four leftmost slices. By strict concavity Alice and Bob divide the slices valuable to both of them equally, so each of them receives at least $t$.
	
When the cake grows, Bob's total cake value as well as the second $k$-set of complementary agent's cake value increases from 1 to $t/s$. By the same reasoning as in case of Cake 1, Bob does not receive anything from the enlargement, and neither Alice nor Bob receive anything from slice \#3. By resource-monotonicity, Alice and Bob should have an absolute value of at least $t$, so they must split slice \#1 equally, giving each of them an absolute value of exactly $t$. Hence, $\vrel_A(Y_A)=t$ and $\vrel_B(Y_B)=t/(t/s)=s$. From here, the proof is the same as in the previous case, and we get:
\begin{align}
\label{eq:tleq1/2}
\forall t\leq 1/2: \forall s \text{ with } 0<s<t<1:
t w'(t) = s w'(s)
\end{align}

The equations (\ref{eq:tgeq1/2}) and (\ref{eq:tleq1/2}) together imply that the function $t w'(t)$ must be the constant function for all $t\in(0,1)$. This implies that $w$ is a logarithmic function, hence the \wmrel{w} is the Nash-optimal rule.	
\end{proof}

\begin{remark}
The example in Lemma \ref{lem:rel-not-rm} involves a possibly unbounded number of agents. This raises the following open question: what division rules are PROP and RM when the number of agents is bounded by some constant?
\end{remark}

From the two parts of Theorem \ref{thm:uniqueness}, we get:
\begin{corollary}
	In the family of welfare-maximizers, the Nash-optimal rule is the only essentially-single-valued rule that is both PROP and RM.
\end{corollary}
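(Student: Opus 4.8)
The plan is to deduce the corollary by synthesising the two uniqueness lemmas just proved (Lemma~\ref{lem:abs-not-prop} and Lemma~\ref{lem:rel-not-rm}) with the essential-single-valuedness characterisation of Lemma~\ref{lem:rel-convex-not-esv}, after splitting the family of welfare-maximizers into its two sub-families. First I would record that the claimed intersection is non-empty: by Corollary~\ref{cor:nash} the Nash-optimal rule is ESV, PROP and RM, and it belongs to the family since it coincides both with the \wmabs{w} rule and with the \wmrel{w} rule for $w=\ln$. So it suffices to prove uniqueness.

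Let $R$ be a welfare-maximizer that is ESV, PROP and RM. By definition $R$ is either an \wmabs{w} rule or a \wmrel{w} rule for some twice-continuously-differentiable strictly increasing $w$. In the first case, the PROP hypothesis alone suffices: Lemma~\ref{lem:abs-not-prop} shows that a proportional \wmabs{w} rule forces $w$ to be logarithmic, hence $R$ is the Nash-optimal rule. In the second case I would first exploit ESV together with PROP: by the contrapositive of Lemma~\ref{lem:rel-convex-not-esv}, a proportional and essentially-single-valued \wmrel{w} rule must have $w$ strictly concave on $[0,1]$ (note that with relative values every agent values the whole cake at $1$, so all the utilities at stake lie in $[0,1]$, which is the domain referenced by that lemma). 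Now the remaining hypothesis RM lets me invoke Lemma~\ref{lem:rel-not-rm}, whose statement is precisely that a \wmrel{w} rule with $w$ strictly concave on $[0,1]$ that is both PROP and RM must be the Nash-optimal rule. The two cases together give the corollary.

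There is no genuine obstacle here, since all the difficult analysis has already been carried out in the preceding lemmas; the points requiring care are purely organisational. One is the bookkeeping in the relative case: checking that the ``ESV $+$ PROP'' hypotheses of the corollary are exactly what is needed to first deduce strict concavity on $[0,1]$ via Lemma~\ref{lem:rel-convex-not-esv} and then to feed into Lemma~\ref{lem:rel-not-rm}. The other is the observation that the Nash-optimal rule is simultaneously a member of both sub-families, so that ``uniqueness across the whole family'' is genuinely obtained by handling each sub-family separately — a rule in the family has at least one of the two representations, and in either representation the argument above forces it to coincide with the Nash-optimal rule.
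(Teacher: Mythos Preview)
Your proposal is correct and follows essentially the same route as the paper. The paper does not spell out a proof of this corollary; it simply records the two bullet points (only proportional \wmabs{w} is Nash-optimal; only ESV, PROP and RM \wmrel{w} is Nash-optimal) and states the corollary, leaving implicit exactly the chaining you make explicit --- in particular the use of Lemma~\ref{lem:rel-convex-not-esv} to pass from ESV\,$+$\,PROP to strict concavity on $[0,1]$ before invoking Lemma~\ref{lem:rel-not-rm}.
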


\subsection{Summary of division-rule properties}
To summarize the properties of division rules proved in this section, Table \ref{tab:summary2} presents the properties that are satisfied by welfare-maximizing rules from a well-studied one-parametric family \citep[chapter 3]{Moulin2004Fair}:
\begin{align*}
w_p(x) = x^p / p && \text{when $p\neq 0$}
\\
w_p(x) = \ln(x) && \text{when $p = 0$}
\end{align*}

\begin{table}[h!]
\footnotesize\centering
	\begin{tabular}{|c|c|c|}
		\hline Axiom & \wmabs{w_p} & \wmrel{w_p} \\
		\hline PO & All & All \\
		\hline ESV & $p<1$ (strictly concave)&$p<1$ (strictly concave) \\
		\hline PROP &  $p=0$ (Nash) &  $p\le0$ (hyper-concave) \\
		\hline RM & $p\le1$ (concave)& $p=0$ (Nash) \\
		\hline PM & $p\le1$ (concave) & $p\le1$ (concave)  \\
		\hline
	\end{tabular}
\protect\caption{Sufficient conditions for the parameterized welfare-maximizing rules.}\label{tab:summary2}
\end{table}

\subsection{Computing Nash-optimal allocations}
\label{sub:nash-computation}
The nice properties of the Nash-optimal rule motivate us to search for algorithms for finding the Nash-optimal division.
We are not aware of an  algorithm for finding an exact Nash-optimal division in the general case\footnote{
There are several algorithms for calculating or approximating the Nash-optimal welfare in markets with divisible goods
\citep{Cole2013Mechanism,Branzei2017Nash}
or indivisible goods \citep{Caragiannis2016Unreasonable}
}.
However,
\citet{Aziz2014Cake} present an algorithm that finds a Nash-optimal division when all agents have piecewise-constant valuations.\footnote{
Their algorithm is based on
earlier algorithms
for finding equilibrium allocations in markets of homogeneous divisible goods, which are equivalent to piecewise-homogeneous cakes. See
\citet{vazirani2007combinatorial,jain2010eisenberg}.
}
The algorithm is called MEA (Market Equilibrium Algorithm)
and it is based on a connection between the Nash-optimal rule and a well-known division rule called \emph{competitive-equilibrium-from-equal-incomes}.

In the following section we prove such connection for general valuations (not only piecewise-constant).
This will imply that any future algorithm developed for one of the rules will work for the other rule as well.

\section{Competitive Equilibrium from Equal Incomes}
\label{sec:ceei-nash}
\subsection{CEEI rules in cake-cutting}
Competitive Equilibrium from Equal Incomes (CEEI) is a well-known rule for fair and efficient allocation of homogeneous goods. It was first introduced into the cake-cutting world by \cite{Weller1985Fair}.
We denote Weller's definition by \emph{WCEEI (weak CEEI)}; the reason will become clear later.
\begin{definition}\label{def:ceei}
Let $C$ be a cake and $X$ an allocation of the cake. Let $P$ be a nonatomic measure on $C$ (called the "price measure"). We say that the pair $(X,P)$ is a \emph{Weak Competitive Equilibrium from Equal Incomes (WCEEI)}  if it satisfies the following conditions:
	\begin{itemize}
		\item WCE: For all $i\in N$
		and $Z\subset C$,
		$P(Z)\le P(X_i)$ implies $\vrel_i(Z) \leq \vrel_i(X_i)$.
		\item EI: For all $i\in N$: $P(X_i)=1$.
	\end{itemize}
\end{definition}
Note that it does not matter whether absolute or relative values are used in this definition, since only values of the same agent are compared.

\citet{Weller1985Fair} proved that a WCEEI cake-allocation always exists. Moreover, a WCEEI allocation has several nice properties.
\begin{lemma}[\citet{Weller1985Fair}]
\label{lem:wceei-ef}
Every WCEEI allocation is envy-free (hence also proportional).
\end{lemma}
\begin{proof}
WCE implies that each agent $i$ prefers his piece $X_i$ over all pieces he can afford. EI implies that all agents have the same set of affordable pieces. Together they imply envy-freeness.
\end{proof}

\begin{lemma}
\label{lem:wceei-wpo}
Every WCE allocation is weakly-Pareto-optimal, i.e, no other allocation is strictly better for all agents.
\end{lemma}
\begin{proof}
Suppose by contradiction that $(X,P)$ satisfies the WCE condition but there exists an allocation $Y$ which is strictly better than $X$ for all agents: $\forall i: v_i(Y_i)>v_i(X_i)$. The WCE condition implies that no agent $i$ can afford the piece $Y_i$: $\forall i: P(Y_i) > P(X_i)$. Summing over all pieces gives  $\sum_{i=1}^n P(Y_i) > \sum_{i=1}^n P(X_i)$. But this is impossible since both sides equal $P(C)$.
\end{proof}
Interestingly, the WCEEI rule is not Pareto-optimal.
\footnote{
Note that Weller's definition of  Pareto-optimality (before his Theorem 1)  actually defines weak-Pareto-optimality.
We are grateful to an anonymous reviewer for this comment.}.
\begin{example}
\label{exm:wceei-not-po}
Consider the following cake.
\begin{center}
	\begin{tabular}{|l|c|c|c|c|c|c|}
		\hline  $\vabs_A$ & \cellcolor{myGreen!25}30 & \cellcolor{myGreen!25}0 & 0 \\
		\hline  $\vabs_B$ & 0 & 10 & \cellcolor{myBlue!25}20  \\
		\hline
		\hline  Price & 0.2 & 0.8 & 1  \\
		\hline
	\end{tabular}
\end{center}
The two leftmost slices go to Alice. She pays $1$ and gets all her value, so the WCEEI conditions are satisfied for her.
The rightmost slice goes to Bob. He pays $1$ and cannot get any better deal for the same price: if he sells a fraction $x$ of his slice, then he loses a value of $20 x$ and gains $x$ money. With $x$ money he can only buy $x/0.8 = 1.25 x$ of the middle slice, which gives him only $12.5 x$ value.
So the WCEEI conditions are satisfied for Bob too. However, it is clear that the allocation is not PO since it can be Pareto-improved by giving the middle slice to Bob.

\end{example}
Pareto-optimality can be restored by adding a requirement that, when an agent is indifferent between several best slices, he gets the cheapest of these slices.
We call the strengthened condition \emph{Parsimonious Competitive Equilibrium (PCE).}%
\footnote{
The condition was introduced by \citet{mas1992equilibrium} and termed ``parsimony'' by
\citet{Bogomolnaia2017}.
}
It requires that for all $i\in N$ and $Z\subset C$:
\begin{itemize}
\item $P(Z)\le P(X_i)$ implies $\vrel_i(Z) \leq \vrel_i(X_i)$ (the WCE condition), and:
\item $P(Z) < P(X_i)$ implies $\vrel_i(Z) <  \vrel_i(X_i)$.
\end{itemize}
\begin{lemma}
\label{lem:pceei-ppo}
Every PCE allocation is Pareto-optimal.
\end{lemma}
\begin{proof}
Suppose by contradiction that $(X,P)$ satisfies the PCE condition but there exists an allocation $Y$ which is weakly better for all agents $i$ and strictly better for at least one agent $j$. Then, the PCE conditions imply that $\forall i: P(Y_i)\geq P(X_i)$ and $P(Y_j)>P(X_j)$. Summing over all pieces gives  $\sum_{i=1}^n P(Y_i) > \sum_{i=1}^n P(X_i)$. But this is impossible since both sides equal $P(C)$.
\end{proof}
However, the PCEEI allocation is still not essentially-single-valued.
\begin{example}
\label{exm:pceei-not-esv}
Consider the following cake.
\begin{center}
	\begin{tabular}{|l|c|c|c|c|c|c|}
		\hline  $\vabs_A$ & \cellcolor{myGreen!25}30 & \cellcolor{myGreen!25}1 & 0 \\
		\hline  $\vabs_B$ & 0 & \cellcolor{myBlue!25}10 & \cellcolor{myBlue!25}20  \\
		\hline
		\hline  $P_1$ & 0.2 & 0.8 & 1  \\
		\hline  $P_2$ & 1   & 0.8 & 0.2  \\
		\hline
	\end{tabular}
\end{center}
Consider the following two allocations: (1) Alice gets the two leftmost slices and Bob gets the rightmost slice and the price is $P_1$, or (2) Alice gets the leftmost slice and Bob gets the two rightmost slices and the price is $P_2$.
Both of them are PCEEI but the agents' valuations are different. \qed
\end{example}


We now define an even stronger condition which we call \emph{SCEEI}. It is adapted from \citet{Reijnierse1998Finding}.
\begin{definition}\label{def:SCEei}
Let $X$ be a cake-allocation and $P$ be a measure on $C$.
The pair $(X,P)$ is a \emph{strong CEEI} (\emph{SCEEI}) if it satisfies the following conditions:
\begin{itemize}
\item
$P(Z) > 0$ iff $Z$ is a positive slice (= valued positively by at least one agent).
\item SCE: For every agent $i$
and positive-slice $Z\subseteq C$
and  positive-slice $Z_i\subseteq X_i$:
$\vrel_i(Z_i)/P(Z_i) \geq \vrel_i(Z)/P(Z)$.
In words, each agent buys only slices that maximize his value-per-price ratio.
\item EI: For every agent $i$: $P(X_i)=1$.
\end{itemize}
\end{definition}
The following lemma justifies the term SCEEI:
\begin{lemma}
Every SCEEI is a PCEEI (hence also PO, EF and PROP).
\label{lem:sceei-is-wceei}
\end{lemma}
\begin{proof}
Since the EI condition is the same in all equilibrium variants, it is sufficient to prove the PCE conditions for every agent $i$.

The EI condition implies that $P(X_i)>0$, so $X_i$ is a positive slice.  Setting $Z_i := X_i$ in the SCE condition implies that, for every positive slice $Z$:
\begin{align*}
\vrel_i(Z)/P(Z) \leq \vrel_i(X_i) / P(X_i)
\end{align*}
Hence, for every positive slice $Z$:
\begin{itemize}
\item $P(Z)\leq P(X_i)$ implies $\vrel_i(Z)\leq \vrel_i(X_i)$ (the WCE condition).
\item $P(Z)<P(X_i)$ implies $\vrel_i(Z)< \vrel_i(X_i)$ (the extra PCE condition).
\qedhere
\end{itemize}
\end{proof}

\subsection{Properties of the the Strong CEEI rule}
In this subsection we prove that the SCEEI rule has many nice properties: in addition to being PO and EF, it is also ESV, RM and PM.
We will use a measure $V$, defined as the sum of all agents' value-measures:
\begin{align*}
V(Z) := \sum_{i\in N} v_i(Z)
\end{align*}
Let $(X,P)$ be a SCEEI. By definition, $P(Z)>0$ iff $v_i(Z)>0$ for at least one $i$, which holds iff $V(Z)>0$. Therefore, $P$ is absolutely-continuous w.r.t. $V$.
By the Radon-Nikodym theorem, there exists a \emph{price-density}  function $p$, such that for every slice $Z\subseteq C$:
\begin{align*}
P(Z)=\int_{x\in Z}{p(x) dV}
\end{align*}

Let $(Y,Q)$ be a SCEEI on an enlarged cake $C \cup E$, and let $q$ be the price-density of $Q$, such that for every slice $Z\subseteq C\cup E$:
\begin{align*}
Q(Z)=\int_{x\in Z}{q(x) dV}
\end{align*}

Define the following subset of the original cake $C$, which contains all those parts of $C$ that are more expensive in equilibrium $Y$ than in equilibrium $X$:

$$C^* = \{x\in C \ | \ q(x)>p(x)\}$$
For every positive-slice $Z^*\subseteq C^*$: $Q(Z^*)>P(Z^*)$. By definition, for every $x\in (C\setminus C^*)$, $q(x)\leq p(x)$. Thus, for every slice $Z'\subseteq (C\setminus C^*)$: $Q(Z')\leq P(Z')$.

\begin{lemma}\label{lemma:CEEI-claim}
If in equilibrium $Y$ agent $i$ holds a positive subset of $C^*$ (i.e.\ $Y_i\cap C^*$ is a positive-slice), then in equilibrium $X$ agent $i$ holds almost only subsets of $C^*$ ($X_i\cap (C\setminus C^*)$ is not a positive-slice).
\end{lemma}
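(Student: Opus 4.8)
The plan is to argue by contradiction using the s-CE (value-per-price maximization) condition twice — once in equilibrium $Y$ and once in equilibrium $X$ — together with the comparison between the price densities $p$ and $q$ on $C^*$ versus $C\setminus C^*$. Suppose agent $i$ holds a positive-measure subset $Y_i\cap C^*$ of $C^*$ in equilibrium $Y$, but also holds, in equilibrium $X$, a positive-measure subset $X_i\setminus C^*$, i.e.\ a positive slice $Z_i^-\subseteq X_i$ with $Z_i^-\subseteq C\setminus C^*$. I would then compare, for agent $i$, the value-per-$Q$-price of a piece inside $Y_i\cap C^*$ with the value-per-$Q$-price of the piece $Z_i^-$.

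The key steps, in order: First, let $Z_i^*:=Y_i\cap C^*$, a positive slice. Since $(Y,Q)$ is an s-CEEI and $Z_i^*\subseteq Y_i$, the s-CE condition in $Y$ gives, for \emph{every} positive slice $Z\subseteq C\cup E$ (in particular for $Z=Z_i^-$),
\begin{align*}
\frac{\vrel_i(Z_i^*)}{Q(Z_i^*)} \geq \frac{\vrel_i(Z_i^-)}{Q(Z_i^-)}.
\end{align*}
Second, since $(X,P)$ is an s-CEEI and $Z_i^-\subseteq X_i$, the s-CE condition in $X$ gives, for every positive slice $Z\subseteq C$ (in particular for $Z=Z_i^*$, which lies in $C$),
\begin{align*}
\frac{\vrel_i(Z_i^-)}{P(Z_i^-)} \geq \frac{\vrel_i(Z_i^*)}{P(Z_i^*)}.
\end{align*}
Third, use the price comparisons: $Z_i^*\subseteq C^*$ is a positive slice, so $Q(Z_i^*)>P(Z_i^*)>0$, while $Z_i^-\subseteq C\setminus C^*$ gives $Q(Z_i^-)\leq P(Z_i^-)$, with $P(Z_i^-)>0$ since every positive slice has a positive price. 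Combining the first inequality with $Q(Z_i^*)>P(Z_i^*)$ and $Q(Z_i^-)\leq P(Z_i^-)$ yields
\begin{align*}
\frac{\vrel_i(Z_i^*)}{P(Z_i^*)} > \frac{\vrel_i(Z_i^*)}{Q(Z_i^*)} \geq \frac{\vrel_i(Z_i^-)}{Q(Z_i^-)} \geq \frac{\vrel_i(Z_i^-)}{P(Z_i^-)},
\end{align*}
which directly contradicts the second inequality. (Here I use $\vrel_i(Z_i^*)>0$, which holds because by s-CE in $Y$ any positive slice held by $i$ has positive value, or equivalently by Pareto-type reasoning.) Hence no such $Z_i^-$ exists, i.e.\ $X_i\cap C^* = X_i$ up to a Lebesgue-null set.

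The main obstacle I anticipate is bookkeeping around degenerate cases — ensuring all the slices invoked ($Z_i^*$, $Z_i^-$) are genuinely \emph{positive} slices so that the s-CE condition and the strict price inequality on $C^*$ both apply, and that the relevant values $\vrel_i(Z_i^*)$, $\vrel_i(Z_i^-)$ are strictly positive so the ratios are well-defined and the strict inequality $Q(Z_i^*)>P(Z_i^*)$ actually propagates to a strict inequality between the ratios. This is where the s-CEEI hypotheses "every positive slice has a positive price" and (implicitly) "every positive slice held by an agent has positive value to him" are essential; I would state these carefully up front. The rest is a short chain of inequalities, so no heavy computation is needed.
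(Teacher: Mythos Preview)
Your proof is correct and follows essentially the same route as the paper: both use the s-CE condition in $Y$ on $Z_i^*=Y_i\cap C^*$, the price comparison $Q(Z_i^*)>P(Z_i^*)$ and $Q(Z')\leq P(Z')$ on $C\setminus C^*$, and then the s-CE condition in $X$ to rule out any positive $Z'\subseteq X_i\cap(C\setminus C^*)$. The only cosmetic difference is that the paper phrases it directly (``no positive $Z'\subseteq C\setminus C^*$ can lie in $X_i$'') while you phrase it as a contradiction, and you are more explicit than the paper about why $\vrel_i(Z_i^*)>0$ is needed for the strict inequality.
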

\begin{proof}
Define $Z^*_i = Y_i\cap C^*$.
By assumption it is a positive slice. Because $Z^*_i\subseteq Y_i$, the SCE condition of equilibrium $Y$ implies that for every positive-slice $Z'\subseteq C\setminus C^*$:
	
$$v_i(Z^*_i)/Q(Z^*_i) \geq v_i(Z')/Q(Z')$$
Because $Z^*_i\subseteq C^*$ and $Z'\subseteq C\setminus C^*$, the definition of $C^*$ implies that the old price of $Z^*_i$ was lower than the new price and the old price of $Z'$ was weakly higher than the new price. Hence,
	
$$v_i(Z^*_i)/P(Z^*_i) > v_i(Z')/P(Z')$$
Now the SCE condition of equilibrium $X$ implies that $Z'$ is not contained in $X_i$.
	
The above is true for every $Z'$ which is a positive slice of $C\setminus C^*$; hence, $X_i \cap (C\setminus C^*)$ cannot be a positive slice.
\end{proof}

The following lemma says that, when the cake grows, the equilibrium price of the old cake does not increase.

\begin{lemma}
	\label{ceei-decreasing-price}
	Let $(X,P)$ be a SCEEI on the cake $C$ and $(Y,Q)$ a SCEEI on an enlarged cake $C\cup E$. Then, for every subset of the original cake $Z\subseteq C$: $Q(Z)\leq P(Z)$.
\end{lemma}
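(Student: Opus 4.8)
The plan is to argue by contradiction using the set $C^*$ of points whose price strictly increased when the cake grew. Suppose $Q(Z) > P(Z)$ for some $Z \subseteq C$. Then $Z$ must meet $C^*$ in a positive-measure set (since outside $C^*$ we have $q \leq p$ pointwise, hence $Q \leq P$ on any slice of $C \setminus C^*$); so it suffices to derive a contradiction from the assumption that $C^*$ itself has positive Lebesgue measure. The heart of the argument is a "conservation of money" count on $C^*$: I will compare how much total price the agents pay for the part of $C^*$ they hold in equilibrium $X$ versus in equilibrium $Y$.

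First I would identify which agents can hold positive pieces of $C^*$ in equilibrium $Y$. Call this set of agents $N^*$. By Lemma \ref{lemma:CEEI-claim}, every agent $i \in N^*$ holds (almost) only subsets of $C^*$ in equilibrium $X$, i.e.\ $X_i \subseteq C^*$ up to a null set. Consequently $C^* \supseteq \bigcup_{i \in N^*} X_i$, and by the EI condition of equilibrium $X$ the total old price of $C^*$ satisfies $P(C^*) \geq \sum_{i \in N^*} P(X_i) = |N^*|$. On the other hand, the part of $C^*$ that is actually allocated in $Y$ is covered by $\bigcup_{i \in N^*} Y_i$ (only agents in $N^*$ hold positive pieces of $C^*$ in $Y$), so by the EI condition of equilibrium $Y$ the total new price of $C^*$ satisfies $Q(C^*) = \sum_{i \in N^*} Q(Y_i \cap C^*) \leq \sum_{i \in N^*} Q(Y_i) = |N^*|$. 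Combining, $Q(C^*) \leq |N^*| \leq P(C^*)$.

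But by the very definition of $C^*$, if $C^*$ has positive measure then $Q(C^*) = \int_{C^*} q\,dx > \int_{C^*} p\,dx = P(C^*)$, since $q > p$ on $C^*$. This contradicts the inequality just derived. Hence $C^*$ is a null set, so $q \leq p$ almost everywhere on $C$, and therefore $Q(Z) = \int_Z q\,dx \leq \int_Z p\,dx = P(Z)$ for every $Z \subseteq C$.

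The step I expect to require the most care is making the covering arguments on $C^*$ rigorous: one must check that $C^*$ is genuinely exhausted (up to null sets) by $\bigcup_{i\in N^*} X_i$ — this is exactly where Lemma \ref{lemma:CEEI-claim} is invoked, and one has to be careful that it applies to \emph{every} agent holding a positive piece of $C^*$ in $Y$ — and, symmetrically, that no agent outside $N^*$ holds a positive piece of $C^*$ in $Y$, so that the $Y$-allocation of $C^*$ is accounted for entirely by $\{Y_i : i \in N^*\}$. The pointwise inequality $q \leq p$ on $C \setminus C^*$ and the Radon–Nikodym representation handle the passage between densities and measures; everything else is the bookkeeping of summing the EI identities over $N^*$.
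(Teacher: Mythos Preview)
Your proposal is correct and follows essentially the same route as the paper's own proof: define $N^*$ (the paper calls its size $k$) as the agents holding a positive slice of $C^*$ in $Y$, use Lemma~\ref{lemma:CEEI-claim} plus EI in $X$ to get $P(C^*)\geq |N^*|$, use EI in $Y$ to get $Q(C^*)\leq |N^*|$, and conclude that $C^*$ must be a null set. Your write-up is slightly more explicit about the covering/bookkeeping than the paper's, but the argument is the same.
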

\begin{proof}
	Suppose there are exactly $k$ agents that hold a positive slice of $C^*$ in equilibrium $Y$. Their total income is $k$ and they can afford all of  $C^*$, so:
	
	$$k\geq Q(C^*)$$
	On the other hand, by Lemma \ref{lemma:CEEI-claim}, in Equilibrium $X$ these $k$ agents spend their entire income on $C^*$, so:
	
	$$k\leq P(C^*)$$
	Combining these two inequalities gives:
	
	$$P(C^*)\geq Q(C^*)$$
	But, for every positive-slice $Z^*\subseteq C^*$, $Q(Z^*)>P(Z^*)$. We conclude that $C^*$ cannot be a positive slice, i.e, $V(C^*)=0$. Hence, $q(x)\leq p(x)$ almost everywhere (w.r.t. $V$). This implies the lemma.
\end{proof}

\begin{corollary}
	\label{ceei-same-price}
	Let $(X,P)$ and $(Y,Q)$ be two SCEEIs on the same cake $C$. Then for all $Z\subseteq C$: $Q(Z) = P(Z)$.
\end{corollary}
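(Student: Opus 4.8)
The plan is to derive this as an immediate consequence of Lemma \ref{ceei-decreasing-price}, applied in the degenerate case where the ``enlargement'' is empty. First I would observe that the proof of Lemma \ref{ceei-decreasing-price} never actually uses that the extra region $E$ has positive Lebesgue measure: the argument only invokes the inclusion $C\subseteq C\cup E$, the s-CE conditions of the two equilibria, and the income-counting bound $k\geq Q(C^*)$ and $k\leq P(C^*)$. Hence the conclusion ``$Q(Z)\leq P(Z)$ for every $Z\subseteq C$'' remains valid when we take $E=\emptyset$, i.e.\ when both s-CEEIs live on the same cake $C$.

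Given that, the proof is symmetric. Applying Lemma \ref{ceei-decreasing-price} with $(X,P)$ in the role of the ``small-cake'' equilibrium and $(Y,Q)$ in the role of the ``large-cake'' equilibrium (with $C\cup E=C$) yields $Q(Z)\leq P(Z)$ for every Borel $Z\subseteq C$. Swapping the roles of the two s-CEEIs and applying the same lemma again yields $P(Z)\leq Q(Z)$ for every Borel $Z\subseteq C$. Combining the two inequalities gives $P(Z)=Q(Z)$ for all Borel $Z\subseteq C$, which is exactly $P\equiv Q$.

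I do not expect any real obstacle here; the only point that requires a word of care is the degenerate-enlargement step, i.e.\ making sure the earlier lemma's proof goes through verbatim when $C\cup E=C$. If one prefers to avoid invoking the lemma in a case slightly outside its literal statement, the alternative is simply to repeat its short argument: define $C^*=\{x\in C\mid q(x)>p(x)\}$ with $p,q$ the Radon--Nikodym derivatives of $P,Q$, note via Lemma \ref{lemma:CEEI-claim} (whose proof likewise does not need an enlargement) that the $k$ agents holding a positive subset of $C^*$ in $Y$ spend their whole income of $k$ on $C^*$ in $X$, so $P(C^*)\geq k\geq Q(C^*)$, whence $C^*$ is a zero-slice; and then run the same argument with $P$ and $Q$ interchanged to conclude $q=p$ almost everywhere.
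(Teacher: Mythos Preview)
Your proposal is correct and follows exactly the paper's approach: apply Lemma~\ref{ceei-decreasing-price} twice with $E=\emptyset$, once in each direction, to obtain both $Q(Z)\leq P(Z)$ and $P(Z)\leq Q(Z)$ for all $Z\subseteq C$. Your added remark that the proof of Lemma~\ref{ceei-decreasing-price} goes through unchanged when the enlargement is empty is a sensible justification that the paper leaves implicit.
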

\begin{proof}
	Apply Lemma \ref{ceei-decreasing-price} twice with $E=\emptyset$: once to prove that $Q(Z)\leq P(Z)$ and another time to prove that $Q(Z)\geq P(Z)$ for all $Z\subseteq C$.
\end{proof}

Define the \emph{SCEEI division rule} as the rule that selects all allocations $X$ for which $(X,P_X)$ is a SCEEI.
\begin{corollary}
	\label{ceei-single-valued}
	The SCEEI division rule is essentially-single-valued.
\end{corollary}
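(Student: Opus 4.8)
The plan is to reduce everything to Corollary \ref{ceei-same-price}, which already tells us that any two s-CEEI price-measures on the same cake coincide. So let $X$ and $Y$ be two allocations returned by the s-CEEI rule on a problem $\Gamma(N,C,(\vabs_i))$; by definition $(X,P_X)$ and $(Y,P_Y)$ are both s-CEEI on $C$, hence Corollary \ref{ceei-same-price} gives $P_X \equiv P_Y$. Write $P$ for this common price-measure. The goal is to show $\vabs_i(X_i)=\vabs_i(Y_i)$ for every $i$, and since $\vrel_i = \vabs_i / \vabs_i(C)$ with $\vabs_i(C)$ a fixed positive constant, it is enough to prove $\vrel_i(X_i)=\vrel_i(Y_i)$.

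First I would record that every agent holds a positive slice of positive value in both equilibria. Indeed, by EI, $P(X_i)=1>0$, and since $P$ charges nothing to zero-slices, $X_i$ is a positive slice; applying the s-CE condition of $(X,P_X)$ with $Z_i=X_i$ and $Z$ any positive slice that agent $i$ values positively (such $Z$ exists because every agent values the whole cake positively) shows $\vrel_i(X_i)/P(X_i) \ge \vrel_i(Z)/P(Z) > 0$, so $\vrel_i(X_i)>0$; the same applies to $Y_i$.

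The heart of the argument is a two-line use of the s-CE condition with the roles of the two equilibria swapped. Fixing $i$, apply s-CE of $(X,P_X)$ to the positive slices $Z_i = X_i \subseteq X_i$ and $Z = Y_i \subseteq C$ to get $\vrel_i(X_i)/P(X_i) \ge \vrel_i(Y_i)/P(Y_i)$. By EI for both equilibria (legitimate because $P_X = P_Y = P$), $P(X_i) = P(Y_i) = 1$, so $\vrel_i(X_i) \ge \vrel_i(Y_i)$. Symmetrically, applying s-CE of $(Y,P_Y)$ to $Z_i = Y_i$ and $Z = X_i$ gives $\vrel_i(Y_i)\ge\vrel_i(X_i)$. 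Hence $\vrel_i(X_i)=\vrel_i(Y_i)$ for all $i$, which is exactly essential-single-valuedness.

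I do not expect a genuine obstacle here — the content is entirely front-loaded into Corollary \ref{ceei-same-price}. The only points requiring a little care are (i) checking that $X_i$ and $Y_i$ are indeed positive slices so that the s-CE condition applies to them, which the EI condition plus the standing assumption that every agent values the cake positively takes care of, and (ii) making sure EI is invoked for both equilibria against the \emph{same} price-measure $P$, which is precisely what Corollary \ref{ceei-same-price} licenses.
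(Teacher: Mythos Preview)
Your proof is correct and follows essentially the same route as the paper: both arguments invoke Corollary~\ref{ceei-same-price} to identify the two price-measures, then use the equilibrium condition together with EI to force $\vrel_i(X_i)=\vrel_i(Y_i)$. The only cosmetic difference is that the paper phrases the last step in terms of a common budget set (each agent attains the same maximum over $\{Z:P(Z)\le 1\}$), whereas you apply the s-CE inequality directly with $Z_i=X_i$, $Z=Y_i$ and vice versa; your version is slightly more explicit about the positivity checks needed to invoke s-CE.
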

\begin{proof}
	By Corollary \ref{ceei-same-price}, every SCEEI on the same cake has the same price-measure $P$. Hence, every SCEEI has the same budget set (the subsets $Z\subseteq C$ for which $P(Z)=1$ are the same in every SCEEI). In every SCE, all agents attain the maximum utility in their budget sets, which is the same in all SCEEIs.
\end{proof}
\begin{remark}
	\cite{Gale1976Linear} proved that, in any exchange economy with a finite number of goods and linear utilities, the SCEEI rule is essentially-single-valued.
	Gale's result can be seen as a special case of our Corollary \ref{ceei-single-valued}, since Gale's economy is equivalent to cake-cutting when the cake is piecewise-homogeneous (each homogeneous region in such a cake represents a commodity in Gale's economy).
\end{remark}

\begin{corollary}
	\label{ceei-RM}
	The SCEEI division rule is resource-monotonic.
\end{corollary}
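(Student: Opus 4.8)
The plan is to read resource-monotonicity straight off Lemma \ref{ceei-decreasing-price}, which already says that enlarging the cake can only weakly lower the equilibrium price of any old slice. The underlying idea is that each agent's bundle costs exactly $1$ (the EI condition), so once prices go down the agent's old bundle is still affordable at the new prices, and the competitive condition then forces the agent's new value to be at least the old one. First I would note that s-CEEI allocations exist on every cake: a Nash-optimal division exists because the set of utility vectors is compact, and by Lemma \ref{lem:nash-is-ceei} it together with its standard price-measure is a s-CEEI; hence the s-CEEI rule is non-empty on every cake-cutting problem $\Gamma$ and on every cake-enlargement $\Gamma'$.

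Next I would do the one estimate that the whole argument rests on. Let $(X,P)$ be any s-CEEI on $\Gamma=(N,C,(\vabs_i)_{i\in N})$ and $(Y,Q)$ any s-CEEI on a cake-enlargement $\Gamma'=(N,C\cup E,(\vabs_i)_{i\in N})$, and fix an agent $i$. Since $X_i\subseteq C$, Lemma \ref{ceei-decreasing-price} gives $Q(X_i)\le P(X_i)$, and the EI condition of $(X,P)$ gives $P(X_i)=1$; thus $Q(X_i)\le 1$, i.e.\ $X_i$ is affordable for agent $i$ at the prices $Q$. Since every s-CEEI is a CEEI, the CE condition of $(Y,Q)$ (Definition \ref{def:ceei}) applied to the slice $X_i$ yields $\vrel_i(X_i)\le \vrel_i(Y_i)$, and because this compares only values assigned by agent $i$ it is equivalent to $\vabs_i(X_i)\le \vabs_i(Y_i)$. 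This holds for every $i\in N$, so every s-CEEI on $\Gamma'$ weakly dominates every s-CEEI on $\Gamma$. Both directions of RM are then immediate: for upwards-RM take, for a given $X\in R(\Gamma)$, any $Y\in R(\Gamma')$; for downwards-RM take, for a given $Y\in R(\Gamma')$, any $X\in R(\Gamma)$. (Essential single-valuedness, Corollary \ref{ceei-single-valued}, is not even needed for this, since the domination is universal rather than merely existential.)

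The only place I would be careful is the bookkeeping between relative and absolute values: the CEEI conditions and the EI normalization are phrased with $\vrel_i$, whereas the RM axiom is about $\vabs_i$, and the normalizing constant $\vabs_i$ of the whole cake changes when the cake grows. This is harmless precisely because the competitive condition relates $X_i$ and $Y_i$ only through agent $i$'s own measure, so that constant cancels; everything else is a direct appeal to Lemma \ref{ceei-decreasing-price}, the EI condition, and Lemma \ref{lem:nash-is-ceei}. (An alternative route, once the equivalence ``s-CEEI $=$ Nash-optimal'' has been established, would be to inherit RM from Corollary \ref{cor:nash}, but the argument above is self-contained and does not rely on that equivalence.)
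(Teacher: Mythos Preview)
Your proof is correct and follows essentially the same route as the paper: invoke Lemma~\ref{ceei-decreasing-price} to conclude that the old bundle $X_i$ is still affordable under the new prices $Q$, then use the competitive-equilibrium condition of $(Y,Q)$ to conclude $\vabs_i(Y_i)\ge\vabs_i(X_i)$. Your write-up is in fact more careful than the paper's: you explicitly establish non-emptiness of the s-CEEI rule via Lemma~\ref{lem:nash-is-ceei}, you address both the upwards and downwards directions of RM, and you flag (and resolve) the relative-versus-absolute normalization issue that the paper leaves implicit.
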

\begin{proof}
Suppose a cake $C$ is enlarged, let $C\cup E$ be the enlarged cake and let $(X,P)$, $(Y,Q)$ be the equilibria on $C$ and $C\cup E$, respectively. By Lemma \ref{ceei-decreasing-price}, the prices on the original cake $C$ in the new price system ($Q$) are weakly lower than in the old price system ($P$). Hence, the budget set under $Q$ contains the budget set under $P$. Hence, in the new equilibrium $Y$, all agents can afford the pieces that they had in equilibrium $X$.  In every SCE, all agents attain the maximum utility in their budget sets, which is weakly larger in $Y$ than in $X$.
\end{proof}

\begin{corollary}
	\label{ceei-PM}
	The SCEEI division rule is population-monotonic.
\end{corollary}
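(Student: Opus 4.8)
The plan is to mirror the proof of Corollary~\ref{ceei-RM} almost verbatim, with a population-reduction playing the role that a cake-enlargement played there. The crucial ingredient is a population analogue of Lemma~\ref{ceei-decreasing-price}: if $\Gamma'=(N',C,(\vabs_i)_{i\in N'})$ is a population-reduction of $\Gamma=(N,C,(\vabs_i)_{i\in N})$, and $(X,P)$, $(Y,Q)$ are s-CEEIs for $\Gamma$ and $\Gamma'$ respectively (both exist, since Nash-optimal allocations exist and are s-CEEI by Lemma~\ref{lem:nash-is-ceei}), then $Q(Z)\le P(Z)$ for every slice $Z\subseteq C$. Since the cake itself is unchanged, this is formally Lemma~\ref{ceei-decreasing-price} with empty enlargement $E=\emptyset$ but with a strictly smaller agent set in the second equilibrium, and I claim its proof goes through with no essential change.

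First I would check that Lemma~\ref{lemma:CEEI-claim} still applies. Let $p,q$ be the Radon-Nikodym derivatives of $P,Q$ and set $C^*=\{x\in C\mid q(x)>p(x)\}$. The proof of Lemma~\ref{lemma:CEEI-claim} uses only the s-CE conditions of the two equilibria for a single agent $i$ that is present in both; this holds for every $i\in N'$. Hence, if $i\in N'$ holds a positive subset of $C^*$ in $Y$, then $X_i\subseteq C^*$ up to a Lebesgue-null set. Then I would repeat the counting step: let $k$ be the number of agents of $N'$ holding a positive subset of $C^*$ in $Y$. In $Y$ the whole cake $C$, and in particular $C^*$, is partitioned among $N'$, and only these $k$ agents hold a positive part of $C^*$, so $Q(C^*)\le k$ (their combined unit budgets). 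In $X$, by the claim each of these $k$ agents spends its entire unit budget inside $C^*$, so $P(C^*)\ge k$. Combining, $P(C^*)\ge Q(C^*)$; but $q>p$ on $C^*$, forcing $C^*$ to be Lebesgue-null, i.e.\ $q\le p$ almost everywhere, which is $Q\le P$ on $C$.

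The rest is then immediate and parallels Corollary~\ref{ceei-RM}. For every remaining agent $i\in N'$ we have $Q(X_i)\le P(X_i)=1$, so the piece $X_i$ lies in agent $i$'s budget set under the new price $Q$; since in a competitive equilibrium every agent receives a utility-maximizing piece within its budget set, $\vrel_i(Y_i)\ge \vrel_i(X_i)$, and because the cake — hence $\vabs_i(C)$ — is unchanged, this also gives $\vabs_i(Y_i)\ge \vabs_i(X_i)$. As $R(\Gamma')$ is essentially-single-valued (Corollary~\ref{ceei-single-valued}), this single $Y$ serves for every $X\in R(\Gamma)$, which establishes downwards-population-monotonicity. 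Upwards-population-monotonicity then follows from downwards-population-monotonicity by the Remark that an essentially-single-valued rule satisfies one direction of a monotonicity axiom if and only if it satisfies the other. Hence the s-CEEI rule is population-monotonic.

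The one point that genuinely needs care — the main obstacle — is confirming that the counting argument of Lemma~\ref{ceei-decreasing-price} survives the switch from ``more cake'' to ``fewer agents'': one must be sure that the departing agents in $N\setminus N'$ cannot interfere (they cannot, being absent from $Y$, and $C^*$ is held entirely by $N'$ in $Y$), and that the strictly smaller total budget on the $Y$-side does not disrupt the chain $Q(C^*)\le k\le P(C^*)$ (it does not, since $k$ is computed precisely from the $Y$-side budgets of those same agents). Everything else is a transcription of the resource-monotonicity argument.
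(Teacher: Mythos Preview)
Your argument is correct, but it is not the route the paper takes. The paper's proof is a two-line reduction to resource-monotonicity: given a s-CEEI $(X,P)$ on $C$ for the agent set $N$, remove one agent $i$ and observe that the restriction of $(X,P)$ to the smaller cake $C\setminus X_i$ and the remaining agents $N\setminus\{i\}$ is again a s-CEEI (the s-CE and EI conditions are inherited verbatim). Then one simply applies Corollary~\ref{ceei-RM} with enlargement $E=X_i$ to conclude that the s-CEEI on the full cake $C$ for the smaller population gives each remaining agent at least as much as in $X$. Iterating over the departing agents (and using ESV for the upward direction) finishes the proof.

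Your approach instead reproves the price-monotonicity lemma directly in the population setting, verifying that Lemma~\ref{lemma:CEEI-claim} and the $Q(C^*)\le k\le P(C^*)$ counting step go through when the second equilibrium has fewer agents rather than more cake. This is sound: the only agents relevant to the counting are those in $N'$, and the departing agents are absent from $Y$, so they cannot contribute to $Q(C^*)$; on the $X$-side the $k$ agents are a subset of $N$, so their disjoint unit-price pieces still lie in $C^*$. The payoff of your route is that it makes explicit why the price-decrease mechanism is indifferent to whether the change is ``more cake'' or ``fewer mouths''; the payoff of the paper's route is brevity and the conceptual point that a departing agent's abandoned share is, from the viewpoint of the survivors, literally a cake enlargement of an already-equilibrium situation.
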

\begin{proof}
	Let$(X,P)$ be a SCEEI allocation. Suppose an agent $i$ leaves and abandons his share $X_i$. The pair $(X,P)$ still satisfies the SCE and EI conditions on the cake $C\setminus X_i$, so it is still a SCEEI allocation. Now, apply the resource-monotonicity (Corollary \ref{ceei-RM}) with $E=X_i$.
\end{proof}

\begin{remark}
	The CEEI rule was one of the first examples of a division rule which is \emph{not} resource-monotonic \citep{Aumann1974Note} and not population-monotonic \citep{Chichilnisky1987Walrasian}. However, as noted later by \cite{Moulin1988}, the examples use complementary products, while the standard cake-cutting model (without connectivity) assumes no complementarities between different parts of the cake.
\end{remark}

\subsection{SCEEI  and Nash-optimal are the same rule}
We saw two rules that are ESV, RM, PM, PO and PROP: the Nash-optimal rule and the SCEEI rule. In this section we prove that they are equivalent --- they always select the same allocations.
\begin{theorem}
\label{thm:nash-sceei}
(a) If $X$ is a Nash-optimal allocation, then there exists a price-measure $P_X$ such that  $(X,P_X)$ is a SCEEI.

(b) If $(X,P)$ is a SCEEI, then the allocation $X$ is Nash-optimal.
\end{theorem}

\begin{proof}[Proof of part (a)] Given a Nash-optimal allocation $X$, we define the price-measure $P_X$ in the following way:\footnote{
\citet{Weller1985Fair} uses a similar price-measure in his proof, but does not consider Nash-optimal allocations.
}
\begin{align*}
\forall i:~~
\forall Z_i \subseteq X_i:~~
P_X(Z_i) = \frac{v_i(Z_i)}{v_i(X_i)}
\end{align*}
The value of $P_X$ for slices that intersect more than one $X_i$ is uniquely determined by additivity, i.e:
\begin{align*}
\forall Z\subseteq C:~~
P_X(Z)
= \sum_{i\in N}P_X(Z\cap X_i)
= \sum_{i\in N}\frac{v_i(Z\cap X_i)}{v_i(X_i)}
\end{align*}
Since proportional allocations exist, the maximum Nash welfare is necessarily positive. Hence, all agents have a strictly positive value in $X$ ($\vrel_i(X_i) > 0$ for all $i$), so $P_X$ is well-defined. We now prove that $(X,P_X)$ satisfy each of the three conditions in the definition of SCEEI.

First, we have to prove that $P_X(Z)>0$ iff $Z$ is a positive-slice, i.e, has a positive value for at least one agent.

If $P_X(Z)>0$ then for at least one agent $i$, $P_X(Z\cap X_i) > 0$. By definition of $P_X$,
$
P_X(Z\cap X_i) = \frac{v_i(Z\cap X_i)}{v_i(X_i)}
$. Hence $v_i(Z\cap X_i)>0$
so $Z$ is indeed a positive slice.

Conversely, if $P_X(Z)=0$ then for all $i$, $P_X(Z\cap X_i)=0$. So by definition of $P_X$, $v_i(Z\cap X_i)=0$, i.e, $Z\cap X_i$ is worthless for agent $i$.
But then $Z\cap X_i$ must be worthless for \emph{all} agents ---  otherwise we could strictly increase the Nash welfare by giving $Z\cap X_i$ to an agent $j\neq i$ who values it positively.
The same is true for every $i$. Therefore, $Z$ is worth 0 for all agents, so it is indeed not a positive slice.

The EI condition is satisfied since by definition $P_X(X_i)=\vrel_i(X_i)/\vrel_i(X_i)=1$.

It remains to prove the SCE condition.
We fix an agent $i$, a positive slice $Z_i\subseteq X_i$, and a positive slice $Z\subset C$.
We show that $\vrel_i(Z_i)/P_X(Z_i) \geq \vrel_i(Z)/P_X(Z)$.

The slice $Z$ can be written as a disjoint union of $n$ slices: $Z = \cup_{j=1}^{n}{Z_j}$, such that for every $j$, $Z_j=Z\cap X_j$. By Lemma \ref{lem:infi}, for every $j$:
\begin{align*}
&&
w'(\vrel_j(X_j))\cdot \vrel_j(Z_j)
&\geq
w'(\vrel_i(X_i))\cdot \vrel_i(Z_j)
\end{align*}
where the function $w$ is a logarithmic function, e.g.\ $w(x) = \ln(x)$. Therefore, $w'(x) = 1/x$ and we get:
\begin{align*}
{\vrel_j(Z_j) \over \vrel_j(X_j)}
&\geq
{\vrel_i(Z_j) \over \vrel_i(X_i)}
\end{align*}
By definition of the standard price-measure, the left-hand side equals $P_X(Z_j)$. Hence:
\begin{align*}
\vrel_i(X_i)\cdot P_X(Z_j)
&\geq
\vrel_i(Z_j)
\end{align*}
Summing the latter inequality for $j=1,\dots,n$ gives:
\begin{gather*}
\vrel_i(X_i)\cdot \sum_{j=1}^n P_X(Z_j)\geq\sum_{j=1}^n \vrel_i(Z_j)\implies \vrel_i(X_i)\cdot P_X\Big(\bigcup_{j=1}^n Z_j\Big)\geq\vrel_i\Big(\bigcup_{j=1}^n Z_j\Big)
\implies\\
 \vrel_i(X_i)\cdot P_X(Z) \geq \vrel_i(Z)
\implies\vrel_i(X_i) \geq \vrel_i(Z)/P_X(Z)
\end{gather*}
By definition of $P_X$, for every $Z_i\subseteq X_i$:
\begin{align*}
\vrel_i(Z_i) / P_X(Z_i) = \vrel_i(X_i) \geq \vrel_i(Z)/P_X(Z)
\end{align*}
so the SCE condition holds.
\end{proof}

\begin{proof}[Proof of part (b)]
Let $(X,P)$ be a SCEEI and let $Y$ be a Nash-optimal allocation. By part (a), the pair $(Y,P_Y)$ is SCEEI. By Corollary \ref{ceei-single-valued}, the SCEEI rule is essentially-single-valued. Hence, all agents have in $X$ exactly the same values that they have in $Y$. Therefore, $X$ is Nash-optimal too.
\end{proof}
\noindent
As a corollary, we learn that  a SCEEI allocation always exists. Moreover, we get an alternative proof to the theorem of \citet{Weller1985Fair}.
\footnote{
Alternative proofs can also be found in \citet{Reijnierse1998Finding} and \citet{Barbanel2005Geometry}.
}
\begin{theorem}
\label{cor:peef}
A Pareto-optimal envy-free cake-allocation always exists.
\end{theorem}
\begin{proof}

By Lemma \ref{lem:existence}, there exists a Nash-optimal allocation, $X$.

Since $X$ maximizes an increasing welfare function, it is Pareto-optimal.

By Theorem \ref{thm:nash-sceei}(a), $(X,P_X)$ is a SCEEI.

By Lemma \ref{lem:sceei-is-wceei}, $(X,P_X)$ is also a WCEEI.

By Lemma \ref{lem:wceei-ef}, $X$ is envy-free.
\end{proof}

To conclude: we saw three different division rules based on competitive equilibrium. From weak to strong they are: WCEEI, PCEEI and SCEEI. All three are envy-free and weakly-Pareto-optimal. The latter two are Pareto-optimal. Only the latter one is ESV and Nash-optimal.

\subsection{CEEI with homogeneous commodities}
At this point we would like to clarify a potential confusion regarding the relation of our results to the CEEI rule for homogeneous divisible commodities.
Suppose there is a finite set $A$ of commodities. The bundle of an agent $i$ is represented by a vector $X_i\in \mathbb{R}^A_+$, where for every commodity $a\in A$, $X_{i,a}$ denotes the amount of commodity $a$ in the bundle.
The common definition of CE in this domain is (see e.g. \citet{mas1992equilibrium}, \citet{Bogomolnaia2017}), for all $i\in N$ and $Z\in \mathbb{R}^A_+$:
\begin{itemize}
\item $P\cdot Z \le P\cdot X_i$ implies $\vrel_i(Z) \leq \vrel_i(X_i)$,
\end{itemize}
where $P$ is the price-vector.

At first glance, the CE condition looks similar to the WCE condition (Definition \ref{def:ceei}). However, it is well known that with homogeneous commodities CEEI is equivalent to the Nash-optimal rule (e.g. \citet{eisenberg1959consensus} and \citet{vazirani2007combinatorial})
while in our case, equivalence with the Nash-optimal rule only obtains for SCEEI. In fact, in Example \ref{exm:pceei-not-esv}, the allocation giving the middle slice to Alice is PCEEI (hence also WCEEI) but not Nash-optimal!

This apparent contradiction is resolved by noting a subtle difference between the definitions of CE and WCE.\footnote{
We are grateful to Fedor Sandomirskiy and an anonymous referee for their help in clarifying this issue.}
The CE condition applies to all non-negative bundles $Z\in\mathbb{R}^A_+$.
In contrast, the WCE condition applies only to slices $Z$ that are contained in the cake $C$; this is equivalent to
requiring that the bundle $Z$ (the demand of an agent) is restricted to be below the social endowment, i.e., cannot exceed the amount of commodities in the economy.
Thus CE is stronger than WCE. In fact, here is a quick proof that CE is equivalent to SCE:

\emph{CE $\implies$ SCE:}
Suppose by contradiction that SCE is violated for agent $i$. This means that there is some commodity $a$ with $P_a>0$ and $X_{i,a}>0$ (agent $i$ holds a positive amount of $a$), and another commodity $b$ with $P_b>0$, such that $v_i(a)/P_a < v_i(b)/P_b$.
Then we can create an alternative bundle $Z$ (which may be larger than the endowment) by replacing all units of $a$ with $X_{i,a}\cdot P_a / P_b$ units of $b$.
Then $Z$ costs the same as $X_i$ but is worth more for agent $i$, contradicting the CE condition.

\emph{SCE $\implies$ CE:}
Suppose by contradiction that CE is violated for agent $i$.
This means that there is a bundle $Z$ with $P\cdot Z \le P\cdot X_i$ and $\vrel_i(Z) > \vrel_i(X_i)$. But then $v_i(Z)/(P\cdot Z) > v_i(X_i)/(P\cdot X_i)$, contradicting the SCE condition. \qed

\section{The Leximin-optimal rules}
\label{sec:leximin}
The leximin-optimal division rules
are based on the principle of equality of welfare. Intuitively these rules work as follows. First we narrow down the solution set to divisions that maximize the utility of the poorest agent. Then among those solutions that satisfy this criterion, we select those which maximize the utility of the second poorest agent. We repeat this process with the third, fourth, etc.\ poorest agent.

To formally define the leximin rule we first define lexicographical ordering of real vectors.
We say that vector $y \in \mathbb{R}^n$  is \emph{lexicographically greater} than $x \in \mathbb{R}^n$ (denoted by $x \prec y$) if $x\not=y$ and there exists a number $1 \le j \le n$ such that $x_i=y_i$ if $i < j$ and $x_{j}<y_{j}.$

\begin{definition}
	(a) For a cake division $X$, define the \emph{relative-leximin-welfare vector} as a vector of length $n$ which contains the relative values of the agents under division $X$ in a non-decreasing order.
	
	(b) A cake division $Y$ is said to be \emph{relative-leximin-better} than $X$ if the relative-leximin-welfare vector of $Y$ is lexicographically greater than the relative-leximin-welfare vector of $X$.
	
	(c) A cake division $X$ is called \emph{relative-leximin-optimal} if no other division is relative-leximin-better than $X$.
	
	(d) The \emph{relative-leximin division rule} is the rule that returns all relative-leximin-optimal divisions of the cake.
\end{definition}

The terms \emph{absolute-leximin-welfare vector}, \emph{absolute-leximin-better}, \emph{absolute-leximin-optimal} and the \emph{absolute-leximin division rule} are defined analogously.

\begin{example}
	\label{exm:leximin}
	In the following cake, the absolute-leximin division rule splits the leftmost slice between Alice and Bob, giving each of them 6. The rightmost slice is given to Carl. Thus, the absolute-leximin-optimal vector is (6,6,9).
	
	\begin{table}[h!]
		\centering
		\begin{tabular}{|l|c|c|c|c|c|}
			\hline  $\vabs_A$ & 12  & 0  \\
			\hline  $\vabs_B$ & 12  & 0  \\
			\hline  $\vabs_C$ & 21 & 9  \\
			\hline
		\end{tabular}
	\end{table}
	The relative value vector of the above division is $(6/12, 6/12, 9/30)$. The corresponding relative-leximin welfare vector is $(9/30, 6/12, 6/12) = (3/10, 1/2, 1/2)$, which is not optimal. The relative-leximin rule divides the leftmost slice between all three agents, giving $1/6$ to Carl (absolute value 3.5) and $5/12$ to Alice and Bob (absolute value 5). The rightmost slice is given to Carl. The relative-leximin-optimal vector is $(5/12, 5/12, 5/12)$.
\end{example}

Lexicographic optimization as a solution concept is used in various kinds of fair division problems.
It was advocated by the famous work of \citet{rawls1971theory}. Similar ideas appear in the bargaining solutions of \citet{Kalai1975Other} and \citet{kalai1977proportional}.
Moreover, it is a limit of an infinite sequence of welfare-maximizing rules \citep[page 104]{Moulin2004Fair}.
In division of homogeneous goods, this rule is known to be population-monotonic (e.g.\ \cite[chapter 7]{Moulin2004Fair}).
	
\cite{Dubins_1961} were the first to study the leximin rule in a cake-cutting setting.
They prove that leximin-optimal cake divisions always exist.
The proof is based on the compactness of the space of value-matrices (which we denoted by $\mathbb{M}$ in Lemma \ref{lem:existence}).
The proof is equally valid for absolute and relative leximin-optimal divisions. Hence, both the absolute- and the  relative- leximin division rules are well-defined.
Moreover, these rules are both Pareto-optimal since by definition, if a division $Y$ Pareto-dominates a division $X$, then both the absolute and the relative leximin-welfare vectors of $Y$ are larger than those of $X$.

\citet{DallAglio2001DubinsSpanier,DallAglio2003Maximin,DallAglio2014Finding} presented various properties and approximation algorithms for finding leximin divisions (they call such divisions \emph{Dubins-Spanier-optimal}).
The goal of the present section is to study the monotonicity properties of the leximin-optimal rules.

In Lemma \ref{lem:envy} we proved that, in a hyper-concave welfare-maximizing allocation, a poorer agent never envies a richer agent. In a leximin allocation, a stronger lemma is true:
\begin{lemma}
	\label{lemma:richer-holds-worthless}
	For every absolute/relative-leximin-optimal division $X$, if agent $j$
	is absolutely/relatively poorer than agent $i$ then agent $j$ believes that the piece of agent $i$ is worthless: $\vabs_j(X_i)=\vrel_j(X_i)=0$.
\end{lemma}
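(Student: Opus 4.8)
The plan is to prove the statement for the absolute-leximin rule; the relative version follows by the identical argument after dividing every value by the (constant, since the cake is fixed) totals $\vabs_\ell(C)$. So let $X$ be absolute-leximin-optimal and suppose, towards a contradiction, that $\vabs_j(X_j)<\vabs_i(X_i)=:b$ (hence $b>0$) for some agents $i,j$, yet $\vabs_j(X_i)>0$. The idea is to move a sufficiently small part of $X_i$ from $i$ to $j$: this strictly raises $j$'s low value while $i$ remains strictly richer than $j$ was, and that should make the leximin-welfare vector lexicographically larger --- contradicting optimality of $X$.

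For the construction: since $\vabs_j(X_i)>0$ the piece $X_i$ is a positive slice, so the theorem of \citet{Stromquist1985Sets} supplies, for each $z\in[0,1]$, a subset $H^z\subseteq X_i$ with $\vabs_i(H^z)=z\,b$ and $\vabs_j(H^z)=z\,\vabs_j(X_i)$. I would form the division $Y$ with $Y_i=X_i\setminus H^z$, $Y_j=X_j\cup H^z$ and $Y_\ell=X_\ell$ for $\ell\neq i,j$. Then $\vabs_j(Y_j)=\vabs_j(X_j)+z\,\vabs_j(X_i)>\vabs_j(X_j)$, while $\vabs_i(Y_i)=(1-z)b<b$, and every other agent is untouched; choosing $z$ in the nonempty interval $\big(0,(b-\vabs_j(X_j))/b\big)$ additionally forces $\vabs_i(Y_i)>\vabs_j(X_j)$.

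What is left is to check that $Y$ is absolute-leximin-better than $X$, and this is the only step that needs care. Put $m:=\vabs_j(X_j)$; let $c$ be the number of agents whose value under $X$ is strictly below $m$, and $d\ge 1$ the number whose value equals $m$ (one of them being $j$). Passing from $X$ to $Y$, agent $j$'s value leaves $m$ upward, agent $i$'s value falls from $b>m$ to $(1-z)b$, which is still $>m$, and nobody else moves; hence exactly the same $c$ agents lie below $m$ and exactly $d-1$ sit at $m$. Therefore the sorted value vectors of $X$ and $Y$ agree in their first $c+d-1$ entries, the $(c+d)$-th entry equals $m$ for $X$ but equals the smallest value exceeding $m$ --- hence is strictly larger than $m$ --- for $Y$. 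So $Y$ is leximin-better than $X$, the desired contradiction. Finally $\vrel_j(X_i)=\vabs_j(X_i)/\vabs_j(C)=0$ as well, since $\vabs_j(C)>0$; and the whole argument reruns verbatim with $\vrel$ in place of $\vabs$ for the relative-leximin case.
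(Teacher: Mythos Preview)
Your argument is correct and follows exactly the paper's approach: assume $\vabs_j(X_i)>0$, transfer a small enough slice of $X_i$ from $i$ to $j$, and derive a leximin-better division, contradicting optimality. The paper's proof is just a one-sentence sketch of this idea, whereas you have carefully filled in the details --- in particular the verification that the sorted value vectors first differ at position $c+d$, which is precisely what is needed to conclude that $Y$ is leximin-better than $X$.
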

\begin{proof}
	If this were not the case, then we could take a small bit of $X_i$ and give it to agent $j$, thus achieving an absolute/relative-leximin-better division. But this contradicts the leximin-optimality of $X$.
\end{proof}
\begin{corollary}
\label{cor:equitable}
If all agents assign a positive value to all positive slices, then in any absolute/relative-leximin-optimal division, all agents have the same absolute/relative value.
\end{corollary}

In Example \ref{exm:leximin}, in case of the absolute-leximin-optimal division, Carl is absolutely-richer than Alice and Bob, and indeed his share is worthless for both of them.

Suppose $X$ is an old division and $Y$ is a new division of the same cake. We say that an agent $i$ \emph{conceded a slice} to agent $j$ if there is a positive slice that belonged to agent $i$ in $X$ and belongs to agent $j$ in $Y$ (in other words, $X_i\cap Y_j$ has a positive value to at least one agent in $N$). If $X$ and $Y$ are both absolute/relative-leximin-optimal, then by Pareto-optimality, $X_i\cap Y_j$ has positive value to both the agent who concedes the slice ($i$) and the recipient ($j$). Hence, we have the following corollary of Lemma \ref{lemma:richer-holds-worthless}, which is true both for relative- and absolute-values:

\begin{corollary}
	\label{cor:donors}
	Let $X$ and $Y$ be two leximin-optimal divisions. If, when switching from $X$ to $Y$, agent $i$ conceded a slice to agent $j$, then in division $X$, agent $i$ is weakly-poorer than $j$, and in division $Y$, agent $i$ is weakly-richer than $j$.
\end{corollary}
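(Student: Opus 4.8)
The plan is to obtain Corollary~\ref{cor:donors} as an immediate consequence of Lemma~\ref{lemma:richer-holds-worthless}, read contrapositively, once we record that the conceded slice has positive value both to the agent who gives it up and to the agent who receives it.

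Concretely, I would set $H := X_i\cap Y_j$, which by hypothesis is a positive slice. Since $X$ is leximin-optimal it is Pareto-optimal, and $H\subseteq X_i$ is a positive slice, so $\vabs_i(H)>0$; likewise $Y$ is Pareto-optimal and $H\subseteq Y_j$, so $\vabs_j(H)>0$ (the same holds for the relative values, and this is exactly the observation stated in the paragraph just before the corollary). In particular $\vabs_j(X_i)\ge\vabs_j(H)>0$ and $\vabs_i(Y_j)\ge\vabs_i(H)>0$, and the analogous inequalities hold for $\vrel$.

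For division $X$: suppose, for contradiction, that agent $i$ is \emph{strictly richer} than agent $j$ in $X$. Then Lemma~\ref{lemma:richer-holds-worthless}, applied to $X$ with $i$ playing the role of the richer agent and $j$ the poorer one, would force $\vabs_j(X_i)=0$, contradicting $\vabs_j(X_i)>0$. Hence in $X$ agent $i$ is weakly poorer than agent $j$. For division $Y$: suppose agent $i$ is \emph{strictly poorer} than agent $j$ in $Y$. Then Lemma~\ref{lemma:richer-holds-worthless}, applied to $Y$ with $j$ playing the role of the richer agent and $i$ the poorer one, would force $\vabs_i(Y_j)=0$, contradicting $\vabs_i(Y_j)>0$. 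Hence in $Y$ agent $i$ is weakly richer than agent $j$. This is exactly the statement of the corollary, and the argument is word-for-word the same for the absolute and the relative versions.

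I do not expect a genuine obstacle here. The only slightly delicate point is the claim that $H$ carries positive value on both ends, which is where Pareto-optimality of leximin-optimal divisions is used; this has already been spelled out in the text preceding the statement, so it can simply be invoked. Everything else is a two-fold application of the contrapositive of Lemma~\ref{lemma:richer-holds-worthless}, once to $X$ and once to $Y$.
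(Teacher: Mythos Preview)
Your proposal is correct and follows essentially the same approach as the paper: the paper's argument is contained in the paragraph immediately preceding the corollary, which observes that by Pareto-optimality the conceded slice $X_i\cap Y_j$ has positive value to both $i$ and $j$, and then invokes Lemma~\ref{lemma:richer-holds-worthless}; you have simply spelled out the two contrapositive applications explicitly.
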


\begin{lemma}
	\label{lemma:leximin-essentially-sv}
	The absolute and the relative leximin division rules are essentially single-valued.
\end{lemma}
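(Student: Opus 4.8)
The plan is to bypass the combinatorics of slices entirely and reduce the statement to a geometric fact about the set of attainable welfare vectors. By the Dubins--Spanier theorem (already used in the proof of Lemma~\ref{lem:esv}), the set $U\subseteq\mathbb{R}^n_{\ge 0}$ of absolute welfare vectors $(\vabs_1(X_1),\dots,\vabs_n(X_n))$, taken over all divisions $X$, is compact and convex; applying the coordinatewise linear map $u\mapsto(u_1/\vabs_1(C),\dots,u_n/\vabs_n(C))$ shows the same for the set $U^{rel}$ of relative welfare vectors. By the definition of the leximin rule, a division $X$ is (absolute- resp. relative-) leximin-optimal exactly when the non-decreasing rearrangement of its welfare vector is lexicographically maximal among all rearrangements of points of $U$ resp. $U^{rel}$. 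So it suffices to prove the following: \emph{in a compact convex set $K\subseteq\mathbb{R}^n_{\ge 0}$ there is a unique point $u^*$ whose non-decreasing rearrangement is lexicographically maximal.} Granting this, every leximin-optimal division has welfare vector $u^*$, which is precisely essential-single-valuedness; the argument is word-for-word the same for $U$ and for $U^{rel}$.

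For the geometric fact, suppose $u,u'\in K$ both have lexicographically maximal sorted vector. First their sorted vectors must coincide --- call the common value $\lambda=(\lambda_1\le\dots\le\lambda_n)$ --- since two distinct non-decreasing sequences are comparable and one would strictly dominate, contradicting maximality of the other. Now consider the midpoint $v=\tfrac12(u+u')\in K$. For each $k$, the map $S_k(w):=\min_{|A|=k}\sum_{i\in A}w_i$ (the sum of the $k$ smallest coordinates) is concave and depends only on the sorted vector, so $S_k(v)\ge\tfrac12 S_k(u)+\tfrac12 S_k(u')=\sum_{i\le k}\lambda_i$. If $\operatorname{sorted}(v)\neq\lambda$, let $j$ be the first index of disagreement; then $\sum_{i\le j}\operatorname{sorted}(v)_i\ge\sum_{i\le j}\lambda_i$ together with agreement on indices $<j$ forces $\operatorname{sorted}(v)_j>\lambda_j$, so $\operatorname{sorted}(v)$ strictly lexicographically dominates $\lambda$, contradicting maximality. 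Hence $\operatorname{sorted}(v)=\lambda$.

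Finally, $\operatorname{sorted}(v)=\operatorname{sorted}(u)=\operatorname{sorted}(u')=\lambda$ gives $\|v\|^2=\|u\|^2=\|u'\|^2$; expanding $\|v\|^2=\tfrac14(\|u\|^2+2\langle u,u'\rangle+\|u'\|^2)$ yields $\langle u,u'\rangle=\|u\|\,\|u'\|$, so by the equality case of Cauchy--Schwarz $u$ and $u'$ are parallel, and since they have equal norm and non-negative coordinates, $u=u'$. This proves the uniqueness of $u^*$, and with it the lemma.

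I expect the only slightly delicate point to be the bookkeeping that ``leximin-optimal $\Leftrightarrow$ the \emph{sorted} welfare vector is lexicographically maximal'' while the conclusion we need is about the \emph{per-agent} welfare vector; the midpoint/Cauchy--Schwarz step is exactly what upgrades ``same sorted vector'' to ``same vector''. (Alternatively one could give a direct exchange argument using Corollary~\ref{cor:donors}: any discrepancy between two leximin-optima produces, via the graph of conceded slices, a cycle of strictly-lucky agents sharing a common value in each division, along which a small simultaneous rotation of the conceded slices yields a leximin-better division; but the convexity argument above is shorter and avoids analysing that cycle, in particular the boundary case where the rotation is value-neutral.)
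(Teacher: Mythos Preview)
Your proof is correct and takes a genuinely different route from the paper. The paper argues directly on the cake: assuming two leximin-optimal divisions $X$ and $Y$ give different values, it picks an ``unlucky'' agent $i$ (one who loses value in $Y$), observes that $i$ must have conceded a positive slice to some $j$, and invokes Corollary~\ref{cor:donors} to show that $j$ is unlucky too; thus all slices conceded by unlucky agents land with other unlucky agents, and handing them back produces a division leximin-better than $Y$, a contradiction. Your argument instead abstracts away the cake entirely: Dubins--Spanier gives a compact convex feasible set $U$, and you prove the purely geometric fact that the leximin point of such a set is unique, via the concavity of the bottom-$k$-sum functions $S_k$ applied to the midpoint, followed by the Cauchy--Schwarz equality case to upgrade ``same sorted vector'' to ``same vector''. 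Both arguments are short; yours is more portable (it applies verbatim to any allocation problem with a compact convex utility set and needs neither Lemma~\ref{lemma:richer-holds-worthless} nor Corollary~\ref{cor:donors}), while the paper's exchange argument stays within the combinatorial toolkit already built for the monotonicity proofs and makes the mechanism of failure explicit. Your parenthetical sketch of the exchange alternative is close in spirit to what the paper actually does, though the paper does not need to find a cycle: it simply lets all unlucky agents reclaim everything they conceded, which suffices because every recipient of a conceded slice is himself unlucky.
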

\begin{proof}
	The proof is the same for the absolute and the relative leximin rules, so the adjectives are omitted.
	
	Let $X$ and $Y$ be two different leximin-optimal divisions. We will prove that, when switching from $X$ to $Y$, there are no "lucky" agents (agents who gain value) nor "unlucky" agents (agents who lose value).
	
	Suppose by contradiction that agent $i$ is unlucky. Then, he must have conceded a slice to at least one other agent, say $j$. By Corollary \ref{cor:donors}, $i$ is weakly-poorer than $j$ in $X$ and weakly-richer in $Y$. But this means that $j$ is also unlucky. So, all slices conceded by unlucky agents are held by other unlucky agents. Suppose the unlucky agents take back all the slices that they conceded. This has no effect on the lucky agents, but strictly increases the value of the unlucky agents, since they now have at least the value that they had in $X$. But this contradicts the optimality of $Y$. Hence, there are no unlucky agents. If $Y$ had a lucky agent without having any unlucky agent that would contradict the optimality of $X$.

	Since $X$ and $Y$ were arbitrary leximin-optimal divisions it follows that all leximin-optimal divisions have the same value vector.
\end{proof}

\begin{lemma}
	\label{lemma:leximin-is-pm}
	The absolute- and relative-leximin division rules are population-monotonic.
\end{lemma}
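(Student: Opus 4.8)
The plan is to prove \textbf{downwards}-population-monotonicity directly, imitating the exchange argument used for essential-single-valuedness in Lemma~\ref{lemma:leximin-essentially-sv}, and then deduce \textbf{upwards}-population-monotonicity from essential-single-valuedness (Lemma~\ref{lemma:leximin-essentially-sv}) together with the Remark that for ESV solutions the two directions of a monotonicity axiom are equivalent (a symmetric direct argument with the roles of the two populations swapped also works). As in Lemma~\ref{lemma:leximin-essentially-sv} the proof is the same for the absolute and the relative rules: since the cake $C$ never changes, for each agent $\vabs_i$ and $\vrel_i$ differ only by the fixed positive factor $\vabs_i(C)$, so ``agent $i$ lost value'' and the desired conclusion mean the same thing in both versions, and I would suppress the adjectives. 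Fix a universe $N$ and a subset $N'\subset N$, let $X$ be a leximin-optimal division of $C$ for population $N$ and $Y$ a leximin-optimal division of $C$ for population $N'$; the goal is $\vabs_i(Y_i)\ge\vabs_i(X_i)$ for all $i\in N'$.

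Suppose not, and let $U:=\{i\in N':\vabs_i(Y_i)<\vabs_i(X_i)\}\ne\emptyset$. If $i\in U$ then $\vabs_i(X_i\setminus Y_i)>0$, so some part of $X_i$ of positive value to $i$ is held in $Y$ by an agent $j\in N'$ (in $Y$ the whole cake belongs to $N'$). That slice lies in $X_i$ and in $Y_j$ and is valuable to $i$, so Pareto-optimality of $X$ (within $N$) and of $Y$ (within $N'$) make it valuable to $j$ as well; hence Corollary~\ref{cor:donors} applies to $X$, $Y$ and the pair $(i,j)$ (its proof only uses leximin-optimality of each division separately, and $i,j$ belong to both populations), giving that $i$ is weakly-poorer than $j$ in $X$ and weakly-richer than $j$ in $Y$. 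Combined with $\vabs_i(Y_i)<\vabs_i(X_i)$ this yields $\vabs_j(Y_j)\le\vabs_i(Y_i)<\vabs_i(X_i)\le\vabs_j(X_j)$, so $j\in U$. Thus every positive-value slice conceded by an agent of $U$ is received by another agent of $U$; dually, for $j\in N'\setminus U$ and $i\in U$ the slice $X_i\cap Y_j$ is worthless to $j$ (otherwise, by Pareto-optimality of $X$, it would be worth something to $i$ too, exhibiting an agent of $U$ that concedes positive value to some $j\notin U$).

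Now define a division $Y'$ of $C$ for population $N'$: give each $i\in U$ its old piece $X_i$, and partition the remainder $C\setminus\bigcup_{i\in U}X_i$ among $N'\setminus U$ so that each $j\in N'\setminus U$ receives at least the sub-piece $Y_j\setminus\bigcup_{i\in U}X_i$, dumping whatever cake is left on an arbitrary agent of $N'\setminus U$ (or on $U$ if $N'\setminus U=\emptyset$). The worthlessness statement gives $\vabs_j(Y_j\setminus\bigcup_{i\in U}X_i)=\vabs_j(Y_j)$, so every $j\in N'\setminus U$ is at least as well off in $Y'$ as in $Y$, while every $i\in U$ is strictly better off. Hence $Y'$ Pareto-dominates $Y$, so (as noted before Lemma~\ref{lemma:richer-holds-worthless}) its leximin-welfare vector is lexicographically larger, contradicting the leximin-optimality of $Y$. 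Therefore $U=\emptyset$, which is downwards-PM; upwards-PM then follows as indicated above.

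I expect the delicate points to be the two bookkeeping facts about concessions: that ``$X_i\cap Y_j$ has positive Lebesgue measure'' can be upgraded to ``positive value to both $i$ and $j$'', which is exactly what Pareto-optimality (in the appropriate population) buys; and that the cake abandoned by the agents of $N\setminus N'$ is precisely the harmless leftover in the construction of $Y'$, because incumbents only ever need to recover value from cake already held by incumbents. Once these are in place the argument is a near-verbatim adaptation of Lemma~\ref{lemma:leximin-essentially-sv}.
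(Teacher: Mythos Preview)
Your proof is correct and follows essentially the same approach as the paper's: reduce to downwards-PM via essential-single-valuedness, then argue by contradiction that an unlucky agent can only have conceded value to other unlucky agents (via Corollary~\ref{cor:donors}), and restore the unlucky agents' old pieces to obtain a Pareto-improvement of $Y$. The paper states this argument more tersely (handling one leaving agent and simply pointing back to the ESV lemma for the mechanics), while you spell out the bookkeeping --- in particular the Pareto-optimality step that upgrades ``positive Lebesgue measure'' to ``positive value for both agents'' and the explicit construction of the improving division $Y'$ --- but the logical skeleton is identical.
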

\begin{proof}
	Again the proof is the same for the absolute and the relative leximin rules. Thanks to Lemma \ref{lemma:leximin-essentially-sv} it is sufficient to prove downwards-PM.
	
	Let $X$ be a leximin-optimal division of $C$. Suppose that agent $i$ leaves and abandons his share $X_i$. So $X$ is now a division of $C\setminus X_i$ among the agents $N\setminus \{i\}$. If  $X_i$ is divided arbitrarily among $N\setminus \{i\}$, the result is $X^+$, a division of $C$ which is weakly leximin-better than $X$. Let $Y$ be a leximin-optimal division of $C$ among $N\setminus \{i\}$. $Y$ is weakly leximin-better than $X^+$ and hence weakly leximin-better than $X$.
	
	We are now going to prove that there are no unlucky agents in $Y$. Suppose by contradiction that agent $i$ is unlucky. Then he must have conceded a slice to an agent $j$, who must also be unlucky (as explained in Lemma \ref{lemma:leximin-essentially-sv}). All slices conceded by unlucky agents, are held by other unlucky agents. If those took back all the slices that they conceded, then all of them would be strictly better off while the lucky ones would remain unaffected. This contradicts the optimality of $Y$. Hence there are no unlucky agents and PM is proved.
\end{proof}

All our lemmata so far were true for both absolute-leximin and relative-leximin rules. The following subsections show some differences between these two rules.

\subsection{Absolute-leximin: PM and RM but not PROP nor EF}  \label{sub:absolute-leximin}

\begin{theorem}
	\label{thm:absolute-leximin}
	The absolute-leximin division rule is population-monotonic and resource-monotonic.
\end{theorem}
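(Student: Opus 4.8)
The population-monotonicity half of the statement is already in hand: Lemma~\ref{lemma:leximin-is-pm} asserts exactly that both leximin rules --- in particular the absolute one --- are population-monotonic. So the content of the theorem is the resource-monotonicity of the absolute-leximin rule. Moreover, since this rule is essentially single-valued (Lemma~\ref{lemma:leximin-essentially-sv}), the Remark on ESV solutions reduces the task to proving upwards-resource-monotonicity: if $X$ is an absolute-leximin-optimal division of a cake $C$ and $Y$ is an absolute-leximin-optimal division of a cake-enlargement $C\cup E$, then $\vabs_i(Y_i)\ge \vabs_i(X_i)$ for every agent $i$.

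The plan is to reuse the machinery of the essential-single-valuedness proof, adapted to two cakes. First I would record the analogue of Corollary~\ref{cor:donors} for this setting: if, when passing from $X$ (on $C$) to $Y$ (on $C\cup E$), agent $i$ concedes a positive slice $S=X_i\cap Y_j$ to agent $j$, then $i$ is weakly-poorer than $j$ in $X$ and weakly-richer than $j$ in $Y$. The proof is the one given for Corollary~\ref{cor:donors}: Pareto-optimality of $X$ forces $\vabs_i(S)>0$, hence $\vabs_i(Y_j)>0$, so Lemma~\ref{lemma:richer-holds-worthless} applied to the leximin-optimal division $Y$ rules out $i$ being poorer than $j$ in $Y$; symmetrically, Pareto-optimality of $Y$ forces $\vabs_j(S)>0$, hence $\vabs_j(X_i)>0$, so Lemma~\ref{lemma:richer-holds-worthless} applied to $X$ rules out $j$ being poorer than $i$ in $X$. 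The point is that this reasoning treats $X$ and $Y$ one at a time, so it does not matter that they are divisions of different cakes.

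Next I would run the ``unlucky agents'' argument. Call an agent \emph{unlucky} if $\vabs_i(Y_i)<\vabs_i(X_i)$, and suppose for contradiction that the set $U$ of unlucky agents is non-empty. If an unlucky agent $i$ concedes a slice to $j$, then by the donors analogue $\vabs_j(X_j)\ge\vabs_i(X_i)>\vabs_i(Y_i)\ge\vabs_j(Y_j)$, so $j$ is unlucky too. Hence almost every point of $\bigcup_{i\in U}X_i$ lies in $\bigcup_{i\in U}Y_i$. I then define a division $Y'$ of $C\cup E$ that gives each $i\in U$ its old piece $X_i$ (distributing the remaining part of $\bigcup_{i\in U}Y_i$ among the members of $U$ arbitrarily) and leaves $Y'_i=Y_i$ for $i\notin U$. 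Every agent in $U$ weakly gains relative to $Y$ --- in fact strictly, since such an agent is unlucky --- and no agent outside $U$ is affected; therefore the leximin-welfare vector of $Y'$ is lexicographically greater than that of $Y$ (sorting is monotone with respect to the coordinatewise order, and strict at one coordinate), contradicting the leximin-optimality of $Y$ on $C\cup E$. So $U=\emptyset$, which is exactly upwards-RM, and the theorem follows.

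The main obstacle is precisely the mismatch of cakes: Corollary~\ref{cor:donors} and the ESV argument were stated for two divisions of the \emph{same} cake, whereas here $X$ lives on $C$ and $Y$ on $C\cup E$. The thing to verify is that every ingredient actually used --- Pareto-optimality of each division and Lemma~\ref{lemma:richer-holds-worthless} applied to each division --- is a property of a single division and therefore transfers verbatim; once that is granted, the bookkeeping that slices conceded by unlucky agents stay within $U$ (so that $Y'$ is a genuine partition of $C\cup E$) is routine.
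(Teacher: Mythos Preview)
Your proposal is correct and follows essentially the same route as the paper. The paper's own proof is a single sentence --- ``the enlargement can be treated as a piece that was acquired from an agent who left the scene'' --- which amounts to rerunning the unlucky-agents argument from the PM proof with $E$ playing the role of the abandoned share; you have simply unpacked that sentence, verifying explicitly that Corollary~\ref{cor:donors} and Lemma~\ref{lemma:richer-holds-worthless} apply to $X$ and $Y$ individually even though they divide different cakes, which is precisely the point the paper leaves implicit.
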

\begin{proof}
PM was proved in Lemma \ref{lemma:leximin-is-pm}. The proof of RM is essentially the same: the enlargement can be treated as a piece that was acquired from an agent who left the scene\footnote{Note that this argument does not work for the relative-leximin division rule. It is possible that the relative-leximin-optimal division of the enlarged cake is lexicographically smaller than the relative-leximin-optimal division of the smaller cake, since the change in the total cake values changes the order between the agents' relative values.}.
\end{proof}

Unfortunately, the absolute-leximin rule is not PROP (hence also not EF).  For instance, in Example \ref{exm:leximin}, the absolute-leximin-optimal division gives Carl a value of 9, which is only $\frac{3}{10}$ of his total cake value.

\subsection{Relative-leximin: PM and PROP but not RM nor EF}  \label{sub:relative-leximin}

\begin{theorem}
	\label{thm:leximin}
	The relative-leximin division rule is proportional and population-monotonic.
\end{theorem}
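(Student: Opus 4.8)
The plan is to handle the two properties separately, since one of them is already done. Population-monotonicity needs no new work: it was established for both the absolute- and the relative-leximin rules in Lemma~\ref{lemma:leximin-is-pm}. So the substance of the theorem is proportionality, and for that I would exploit the defining feature of the leximin construction --- that it first maximizes the smallest relative value --- together with the classical fact that proportional divisions exist.

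Concretely, first invoke a classical existence result \citep{Steinhaus1948} (equivalently, the Dubins--Spanier version of Lyapunov's theorem \citep{Dubins_1961}, which is available because our value measures are non-atomic) to obtain a division $X^P$ with $\vrel_i(X^P_i)\geq 1/n$ for every agent $i$. Hence the maximin value $\max_{X}\min_i \vrel_i(X_i)$ is at least $1/n$. Now let $X$ be any relative-leximin-optimal division. Since $X$ is leximin-optimal, $X^P$ is not relative-leximin-better than $X$, so the relative-leximin-welfare vector of $X$ is lexicographically at least that of $X^P$; in particular its smallest coordinate is at least the smallest coordinate of the relative-leximin-welfare vector of $X^P$, which is $\geq 1/n$. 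Therefore $\vrel_i(X_i)\geq 1/n$ for all $i$, i.e. $X$ is proportional.

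An alternative, more self-contained route mirrors the proof of Theorem~\ref{lem:rel-prop}. Suppose some agent $i$ is ``unhappy'', meaning $\vrel_i(X_i)<1/n$. Since $\sum_{j} \vrel_i(X_j)=\vrel_i(C)=1$, the pigeonhole principle forces $\vrel_i(X_j)>1/n>\vrel_i(X_i)$ for some $j\neq i$, so $i$ envies $j$. Then $\vrel_i(X_j)>0$, and Lemma~\ref{lemma:richer-holds-worthless} (applied with the roles of the two agents swapped) shows that $i$ cannot be relatively poorer than $j$, hence $\vrel_j(X_j)\leq \vrel_i(X_i)<1/n$, so $j$ is unhappy as well. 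Thus the set of unhappy agents is nonempty and every agent in it envies another agent in it, so it contains an envy-cycle; performing the cyclic exchange along this cycle strictly benefits every agent on it and harms no one, which is a strict Pareto improvement and hence a relative-leximin-better division --- contradicting the leximin-optimality of $X$.

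I do not anticipate a real obstacle. The only points requiring care are (i) citing/justifying the existence of a proportional division in our non-normalized model, where it is the relative values that are bounded below by $1/n$ and where non-atomicity is exactly the hypothesis the classical theorems use, and (ii) the elementary observation that lexicographic dominance of sorted vectors preserves the minimum coordinate, which is what makes the ``leximin beats the proportional division'' step rigorous. If one prefers the envy-cycle argument, the only subtlety is to note that a cyclic exchange among mutually-envying agents strictly raises each participant's value, hence is leximin-better, so leximin-optimality rules it out.
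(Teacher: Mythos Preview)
Your primary argument is exactly the paper's: PM is Lemma~\ref{lemma:leximin-is-pm}, and PROP follows because a proportional division exists, its relative-leximin-welfare vector is at least $(1/n,\dots,1/n)$, and the leximin-optimal vector lexicographically dominates it, hence has every coordinate $\geq 1/n$. Your alternative envy-cycle route via Lemma~\ref{lemma:richer-holds-worthless} is also correct (leximin-optimal allocations are PO, so envy-cycles are impossible), but it is unnecessary extra work here.
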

\begin{proof}PM was proved in Lemma \ref{lemma:leximin-is-pm}.
PROP holds because proportional divisions exist. The relative-leximin-welfare of a proportional division is at least $(1/n,\dots,1/n)$, hence the optimal relative-leximin-welfare vector must be at least $(1/n,\dots,1/n)$.
\end{proof}

\begin{example}
	The cake below shows that the relative-leximin rule is not RM.
	
	\begin{table}[h!]
		\centering
		\begin{tabular}{lcccccc}
			& &   &  & &  & $\blacktriangledown$ \\
			\hline \multicolumn{1}{|c}{$\vabs_A$} &\multicolumn{1}{|c}{9}&\multicolumn{1}{|c}{9}& \multicolumn{1}{|c}{0} & \multicolumn{1}{|c}{0} & \multicolumn{1}{|c}{0} & \multicolumn{1}{|c|}{ 0}   \\
			\hline \multicolumn{1}{|c}{$\vabs_B$} &\multicolumn{1}{|c}{9}&\multicolumn{1}{|c}{9}& \multicolumn{1}{|c}{0}& \multicolumn{1}{|c}{0} & \multicolumn{1}{|c}{0} & \multicolumn{1}{|c|}{ 0}   \\
			\hline \multicolumn{1}{|c}{$\vabs_C$} &\multicolumn{1}{|c}{4}&\multicolumn{1}{|c}{4}& \multicolumn{1}{|c}{10} & \multicolumn{1}{|c}{0} & \multicolumn{1}{|c}{0} & \multicolumn{1}{|c|}{ 18}  \\
			\hline \multicolumn{1}{|c}{$\vabs_D$} &\multicolumn{1}{|c}{4}&\multicolumn{1}{|c}{4}& \multicolumn{1}{|c}{0} & \multicolumn{1}{|c}{10} & \multicolumn{1}{|c}{0} & \multicolumn{1}{|c|}{ 18}  \\
			\hline \multicolumn{1}{|c}{$\vabs_E$} &\multicolumn{1}{|c}{4}&\multicolumn{1}{|c}{4}& \multicolumn{1}{|c}{0} & \multicolumn{1}{|c}{0} & \c{10} & \multicolumn{1}{|c|}{ 18}  \\
			\hline
		\end{tabular}
	\end{table}
The largest value that can be given to both Alice and Bob is 9. Hence, in the smaller cake, the optimal relative-leximin-welfare vector is $$(9/18, 9/18, 10/18, 10/18, 10/18).$$ It is attained by halving the two leftmost slices between Alice and Bob, and giving the three slices at their right to Carl, David and Eve, in that order.
	
	In the larger cake, the largest value that can be given to Carl, David and Eve from the additional slice at the right is 6. So the largest value that can be given to them from the four rightmost slices is 16. However, the total cake value has doubled for them. Hence, if Alice and Bob keep their share of 9, the relative-leximin-welfare vector changes to  $(16/36, 16/36, 16/36, 9/18, 9/18)$.
	
	Note that Alice and Bob are now relatively-richer than Carl, David and Eve, and their pieces have a positive value for them. Thus, by Lemma \ref{lemma:richer-holds-worthless}, the division in which Alice and Bob keep their current shares cannot be relative-leximin-optimal.
	\qed
\end{example}

\begin{remark}
\citep{Bogomolnaia2016Competitive}
study the CEEI and leximin rules for a finite number of divisible goods. For this case, they provide alternative proofs of non-monotonicity of relative-leximin and of resource-monotonicity of CEEI.
\end{remark}

\begin{example}
The cake below shows that the relative-leximin rule, while PROP, is not EF.

	\begin{center}
		\begin{tabular}{|l|c|c|c|c|}
			\hline  $\vabs_A$ & \cellcolor{myGreen!25}2/5  & 3/5  & 0 \\
			\hline  $\vabs_B$ & 0  & \cellcolor{myBlue!25}1/3  &  {2/3} \\
			\hline  $\vabs_{C_1}$ & 0 & 0  & \cellcolor{myOrchid!25}1   \\
			\hline  $\vabs_{C_2}$ & 0 & 0  & \cellcolor{myOrchid!25}1 \\
			\hline  $\vabs_{C_3}$ & 0 & 0  & \cellcolor{myOrchid!25}1 \\
			\hline
		\end{tabular}
	\end{center}
A $C$-agent can only get a share more than 1/3 if another $C$-agent gets less. Similarly we cannot increase the share of Bob above $1/3$ without harming the $C$-agents, and cannot increase the share of Alice without harming Bob. Thus the leximin-optimal vector is

\begin{equation*}
\Big(\frac{1}{3},\frac{1}{3},\frac{1}{3},\frac{1}{3},\frac{2}{5}\Big).
\end{equation*}
Alice envies Bob's piece, therefore the leximin rules are not envy free.\qed
\end{example}
\begin{remark}
\citet[Theorem 3.10]{DallAglio2003Maximin}
give a more complicated proof of a stronger claim: they prove that the relative-leximin rule is not EF even when all agents value all slices positively. This contradicts a claim that \citet{Weller1985Fair} attributes to \citet{Dubins_1961}. The confusion probably arose due to an ambiguity in the term ``equitable''.
\citet{Weller1985Fair} uses this term to mean ``envy-free'', while \citet{Dubins_1961} use this term to mean ``all agents have the same relative value''. Indeed, Corollary \ref{cor:equitable} shows that a relative-leximin allocation is equitable in the latter sense.
\end{remark}

\section{Conclusion and Future Work} \label{sec:conclusion}

We studied monotonicity properties in combination with classic fairness axioms of envy-freeness, proportionality and Pareto-optimality. We proved that the Nash-optimal rule is a SCEEI. Furthermore we showed that the Nash-optimal rule is the only essentially single valued rule in a large family of welfare maximizing rules that is resource monotonic and proportional. Table \ref{tab:summary} summarizes the properties of the various division rules featured in this paper.
\def\y{\c{\yy}}  
\def\n{\c{\nn}}   
\def\N{\c{\NN}}  
\def\Y{\c{\YY}}  

\def\yy{\textcolor[RGB]{0,200,0}{Yes}}  
\def\nn{\textcolor[RGB]{200,0,0}{No}}   
\def\NN{\textcolor[RGB]{0,0,200}{No*}}  
\def\YY{\textcolor[RGB]{179,109,86}{Y.c.u.}}  
\def\upw{\textcolor[RGB]{190,125,219}{Upw}} 
\def\c#1{\multicolumn{1}{|c|}{#1}}  
\def\h#1{\c{\footnotesize #1}}  
\begin{table}[h!]
\footnotesize\centering
\begin{tabular}{lccccccc}
\hline
\c{\textbf{Rule}} & \h{ESV} &\h{EF} &\h{PROP} &\h{PO}  &\h{RM} &\h{PM}  \\ \hline
\c{absolute-leximin}    & \y   &\n & \n & \y &  \y & \y  \\ \hline
\c{relative-leximin}    & \y  & \n & \y & \y &  \n & \y  \\ \hline
\c{absolute-utilitarian}    & \n   &\n & \n & \y &  \y & \y  \\ \hline
\c{relative-utilitarian}    & \n  & \n & \n & \y &  \n & \y  \\ \hline
\c{Nash-optimal/SCEEI}    & \y  & \y & \y & \y &  \y & \y  \\ \hline
\end{tabular}
\protect\caption{\label{tab:summary}Properties of division rules presented in this paper. 
}

\end{table}

The present paper opens up many interesting research questions.

\begin{itemize}
\item We proved the uniqueness of the Nash-optimal rule in the family of welfare-maximizers. We believe that this family, in itself, can be characterized by adding axioms used in the cardinal-welfarism framework, such as anonymity and separability (see chapter 3 of \cite{Moulin2004Fair}). We have not done so since we wanted to keep the present paper focused on the fairness axioms, but this is an interesting direction for future work.
\item When dividing resources such as time or land, it may be important that the pieces are connected. In an accompanying technical, we show that when the agents insist on receiving a connected piece, no proportional and Pareto-optimal rule can be either resource-monotonic or population-monotonic report \citep{ourArxivPaperConnected}. If Pareto-optimality is relaxed to weak-Pareto-optimality, then there exists a resource-monotonic division procedure for two agents and a population-monotonic division rule for $n$ agents. It is an open question whether there exist proportional resource-monotonic rules for three or more agents with the connectivity constraint.
\item Monotonicity properties may prevent some but not all possibilities of strategic manipulation. In this paper we ignored strategic considerations and assumed that all agents truthfully report their valuations.
Indeed, \citet{Branzei2015Dictatorship} show that, under mild technical conditions, any deterministic truthful cake-cutting mechanism leaves one agent with no cake at all, so truthfulness and fairness can be combined only by a randomized mechanism.
An interesting future research topic is how to ensure monotonicity in such mechanisms. Recently, \cite{Bogomolnaia2016Competitive} discussed certain strategic properties of CEEI in the context of homogeneous divisible goods\footnote{We are grateful to an anonymous reviewer for this comment.}. It is reasonable to assume that these properties transfer to the cake-cutting setting, but this requires further work.
\item Finally, there is the question of how to compute Nash-optimal allocations. When the cake is piecewise homogeneous, the algorithm of \cite{Aziz2014Cake} finds Nash-optimal allocations in polynomial time. It is unclear how fast is their method exactly, but they use the primal-dual scheme for convex programs developed by \cite{Devanur2008}, which runs in $\mathcal{O}(n^4(\log n + n\log U + \log M))$ time (where $U$ is the maximal utility any agent can achieve, and $M$ is the total amount of money). It is a question how Nash-optimal allocations can be found when the valuations are not piecewise-constant.

\end{itemize}

\section{Acknowledgments}
This paper was born in the COST Summer School on Fair Division in Grenoble, 7/2015 (FairDiv-15). We are grateful to COST and the conference organizers for the wonderful opportunity to meet with fellow researchers from around the globe. In particular, we are grateful to Ioannis Caragiannis, Ulle Endriss and Christian Klamler for sharing their insights on cake-cutting with us. We are also thankful to Marcus Berliant, Shiri Alon-Eron, Herve Moulin, Fedor Sandomirskiy, Christian Blatter, Ilan Nehama, Peter Kristel, Alex Ravsky and Kavi Rama Murthy for their very helpful comments.

The authors acknowledge the support of the `Momentum' Programme (LP-004/2010) of the Hungarian Academy of Sciences, the \'UNKP-17-4-I, New National Excellence Program of the Ministry of Human Capacities the OTKA grant K124550, 
the ISF grants 1083/13 and 1394/16, the Doctoral Fellowships of Excellence Program, the Wolfson Chair and the Mordecai and Monique Katz Graduate Fellowship Program at Bar-Ilan University.  

This research was partially supported by Pallas Athene Domus Educationis Foundation. The views expressed are those of the authors' and do not necessarily reflect the official opinion of Pallas Athene Domus Educationis Foundation.

\appendix
\section*{Appendix. Proof of Technical Lemma}
\intervallemma*
\begin{proof}%
We first prove statement 1.

If $w$ is not strictly-concave in $[a,b]$, then $w'$ is not strictly-decreasing in $[a,b]$, so $w''$ is not strictly-negative in $[a,b]$. There are two cases:

\emph{Case \#1:} there exists a point where $w''$ is positive: $\exists x\in[a,b]: w''(x)>0$.
By continuity of $w''$, there exists an open interval around $x$ where $w''$ is positive. In this interval $w'$ is increasing and $w$ is convex.

\emph{Case \#2:} $w''$ is always weakly-negative: $\forall x\in[a,b]: w''(x) \leq 0$. So the function $w'$ is weakly-decreasing in $[a,b]$.
But, since $w'$ is not strictly-decreasing in $[a,b]$, there exist some $s,t$ with $a\leq s<t\leq b$ such that $w'(s)\leq w'(t)$. This means that, in the interval $[s,t]$, the function $w'$ must be constant.
In this interval, $w$ is linear, hence also convex.

The proof of the statement 2 is entirely analogous to Case \#1 above.
\end{proof}

\section*{References}
\small{
\bibliography{CCM}
}
\end{document}